\documentclass[english]{lipics-v2021}

\nolinenumbers

\usepackage{booktabs}
\usepackage{mathtools}
\newcommand{\Ocal}{\mathcal{O}}
\newcommand{\Qdet}{Q_{\mathrm{det}}^*}
\newcommand{\Qrand}{Q_{\mathrm{rand},\delta}^*}
\newcommand{\Qround}[1]{Q_{\mathrm{det}}^{[#1]}}
\newcommand{\Qrandnonad}{Q_{\mathrm{rand,nonad},\delta}}
\newcommand{\Rexp}{R_{\delta}^{\mathrm{exp}}}
\newcommand{\Sub}{\operatorname{Sub}}
\newcommand{\SD}{\operatorname{sd}}
\newcommand{\bits}{\{0,1\}}
\newcommand{\eps}{\epsilon}

\title{Sharp Two-Round Adaptivity and Round Hierarchies for Semantic Regular Expressions}
\titlerunning{Sharp Adaptivity Hierarchies for SemREs}

\author{Runzhou Li}{The State Key Laboratory of Blockchain and Data Security, Zhejiang University}{darkpaper@zju.edu.cn}{}{}
\author{Hongfei Fu}{School of Computing and Artificial Intelligence, Shanghai University of Finance and Economics}{hongfeifu1984@gmail.com}{}{}
\author{Qingkai Shi}{The State Key Laboratory of Novel Software Technology, Nanjing University}{qingkaishi@nju.edu.cn}{}{}
\author{Peisen Yao}{The State Key Laboratory of Blockchain and Data Security, Zhejiang University}{payoaa@zju.edu.cn}{}{}

\authorrunning{R.\,Li, H.\,Fu, Q.\,Shi, and P.\,Yao}

\Copyright{Anonymous Author(s)}

\ccsdesc[500]{Theory of computation~Automata extensions}
\ccsdesc[300]{Theory of computation~Parameterized complexity and exact algorithms}
\ccsdesc[100]{Software and its engineering~Automated static analysis}

\keywords{semantic regular expressions, 
oracle query complexity,
adaptive query complexity, 
round complexity,
regular-expression membership, monotone Boolean functions}

\begin{document}

\maketitle

\begin{abstract}
Semantic regular expressions (SemREs) attach external Boolean predicates to
matched spans, making both the number and the sequentiality of oracle calls
central resources.  For a fixed expression and word, we represent membership
by a polynomial-size monotone span circuit and identify optimal semantic
evaluation with Boolean decision-tree evaluation.

We determine the extremal power of adaptivity asymptotically sharply.  For
every $E\ge2$, there is a unary, star-free, semantic-depth-one instance of
syntax size $\Theta(E)$ with $E$ essential oracle keys and only unit-length
semantic spans whose one-round cost is $E$, whereas its exact two-round and
unrestricted deterministic costs are
\[
 \log_2 E+\tfrac12\log_2\log_2 E+O(1).
\]
Consequently, the largest nonadaptive-to-adaptive ratio is
$(1+o(1))E/\log_2E$, including the optimal leading constant.  A second
restricted family exhibits a complete round hierarchy: its optimal $R$-round
cost is $\Theta(R E^{1/R})$.  Thus the maximal gap already appears in two
rounds, while other instances interpolate smoothly across all round budgets.

Both constructions admit one-predicate realizations over the fixed alphabet
$\{0,1,\#\}$ with logarithmic-length semantic spans and
$O(E\log^2 E)$ total representation size.  Under pointwise error $\delta<1/2$
and worst-case expected cost, randomized nonadaptive complexity is exactly
$(1-2\delta)E$ for every instance with $E$ essential keys.  Finally, for a
fixed word $w$ and $h$ predicate names, the exact randomized minimax value is
$(1-2\delta)h\SD(w)$, where $\SD(w)$ counts distinct substring values; a span
bound $s$ replaces $\SD(w)$ by $\SD_s(w)$.  These results separate semantic
information acquisition, parallel latency, and local symbolic matching cost.
\end{abstract}

\section{Introduction}
\label{sec:introduction}

Regular expressions describe syntactic structure.  Semantic regular
expressions (SemREs) additionally attach predicates to matched spans and
delegate those predicates to an external oracle.  A predicate may ask whether
a span names a city, expresses a positive opinion, or denotes a
security-relevant action.  The oracle may be a language model, a learned
classifier, a database service, or a human annotator.  In each case, an oracle
call can cost far more than an ordinary string-processing step.

This difference creates two resources absent from classical
regular-expression matching.  The first is the amount of \emph{semantic
information}: how many distinct predicate--substring values must be obtained?
The second is \emph{semantic latency}: how many sequential rounds of calls are
needed if independent queries can be issued in parallel?  An accepting parse
may be certified by a few positive answers, a negative instance may require
ruling out every candidate span, and an adaptive matcher may use early answers
to decide which later values matter.  Counting filter occurrences or span
positions alone does not capture these effects.

SemREs were introduced for synthesizing extraction patterns from examples
\cite{chen2023semantic}.  Huang et al.\ subsequently gave a concrete
SNFA/query-graph membership algorithm with an $O(|r||w|^2)$ query bound and
matching worst-case mechanisms in the relevant regimes
\cite{huang2025membership}.  We instead minimize over \emph{all} correct
matchers after both the expression $r$ and word $w$ have been fixed:
\begin{quote}
How many oracle values are intrinsically required by one SemRE membership
instance, and how does the optimum change with the number of adaptive rounds?
\end{quote}

We work in a black-box, extensional model.  A query is a pair $(q,x)$ of a
predicate name and a string value; repeated occurrences of the same pair share
one answer, while distinct pairs have no promised relationship.  Every fixed
instance therefore induces a finite monotone Boolean function $F_{r,w}$ of its
oracle values.  Our span-circuit theorem constructs this function in
polynomial size despite exponentially many parses, and the decision-tree
characterization makes Boolean query complexity an exact model of semantic
information acquisition.

\smallskip
\noindent\textbf{Sharp Maximal Gap in Two Rounds.}
Our headline result determines the largest possible adaptivity gap, including
its leading constant.  Let $e(r,w)$ denote the number of essential oracle
keys, and let $\Qround{R}(r,w)$ be the minimum worst-case number of queries made
by an exact deterministic matcher using at most $R$ parallel rounds.  Thus
$R=1$ is nonadaptive, while $\Qdet(r,w)$ allows unrestricted adaptivity.
For every integer $E\ge2$, Theorem~\ref{thm:sharp-adaptivity} constructs a
unary, star-free, semantic-depth-one instance $(r_E,w_E)$ such that
\[
 e(r_E,w_E)=E,
 \qquad
 |r_E|=\Theta(E),
 \qquad
 \Qround{1}(r_E,w_E)=E,
\]
all semantic spans have length one, and
\[
 \Qround{2}(r_E,w_E)=\Qdet(r_E,w_E)
 =\log_2E+\tfrac12\log_2\log_2E+O(1).
\]
Every decision tree with $E$ essential variables has depth at least
$\lceil\log_2(E+1)\rceil$.  Hence, if
\[
 \Gamma(E)=
 \max_{e(r,w)=E}\frac{\Qround{1}(r,w)}{\Qdet(r,w)},
\]
then
\[
 \Gamma(E)=(1+o(1))\frac{E}{\log_2E}.
\]
The previous $\Theta(E/\log E)$ statement left a factor of two in the
construction; the new theorem determines the asymptotically optimal constant
and shows that only two rounds are needed.

The Boolean starting point is the classical monotone addressing function,
which is known to combine shallow decision trees with exponentially many
relevant variables~\cite{kane2013monotone}.  We use a partial central-layer
version.  Querying $k$ address bits identifies at most one relevant data bit,
while an exact-degree argument proves depth $k+1$.  The SemRE contribution is
an exact calibration to every $E$, together with a linear-size realization
using a prefix-trie factorization.

\smallskip
\noindent\textbf{A Complete Round Hierarchy.}
The guarded recursive construction from the preliminary result has a different
strength.  If its height is $t$, it has $e_t=3\cdot2^t-2$ essential keys and
unrestricted cost $2t+1$.  Theorem~\ref{thm:round-hierarchy} determines its
cost for every $1\le R\le t$:
\[
 \Qround{R}(r_t,w_t)
 =\Theta\!\left(R\,2^{t/R}\right)
 =\Theta\!\left(R\,e_t^{1/R}\right).
\]
The upper bound queries complete blocks of the recursive tree.  The lower
bound restricts the guards to a promised pointer tree and uses an adversary
that preserves an untouched descendant subtree after each batch.  Rounds of
adaptivity are a standard refinement in query models
\cite{canonne2018adaptivity}; here they quantify batched semantic calls
directly.  With a cap of $B$ calls per RPC, the family needs
\[
 \Theta\!\left(
 \max\left\{1,\frac{\log e_t}{\log(B+1)}\right\}
 \right)
\]
rounds.  Thus monotone addressing realizes the maximal gap immediately,
whereas guarded recursion gives a smooth latency--work tradeoff.

\begin{table}[t]
\caption{The three extremal families play complementary roles.  Costs are
deterministic unless randomization is shown explicitly.}
\label{tab:result-map}
\small
\centering
\begin{tabular}{@{}llll@{}}
\toprule
Family & One round & $R$ rounds & Role \\
\midrule
Monotone address & $E$ & $\log E+\frac12\log\log E+O(1)$ for $R\ge2$
  & Sharp maximal gap \\
Guarded recursion & $\Theta(E)$ & $\Theta(RE^{1/R})$
  & Full round hierarchy \\
Factor OR & $E$ & $E$; randomized $(1-2\delta)E$
  & Exact minimax extreme \\
\bottomrule
\end{tabular}
\end{table}

\smallskip
\noindent\textbf{Fixed Vocabulary Without a Quadratic Blowup.}
The unary constructions use distinct predicate names to create distinct keys
for the same one-letter span.  Theorem~\ref{thm:sparse-dnf-compiler} gives a
generic packed encoding for a monotone DNF with $M$ terms and $L$ literal
occurrences.  Over the fixed alphabet $\{0,1,\#\}$ and one predicate name, it
uses only logarithmic-length filtered spans and total representation size
\[
 O((M+L)\log(e+1)).
\]
Applied to both hard families, this is $O(E\log^2E)$, improving the earlier
$O(E^2\log E)$ construction.  The logarithmic span length is optimal up to
constants for any fixed alphabet and one predicate.

\smallskip
\noindent\textbf{Exact Randomized Values.}
Under pointwise error $\delta<1/2$ and worst-case expected query cost,
Theorem~\ref{thm:random-nonadaptive} proves a general identity:
\[
 \Qrandnonad(r,w)=(1-2\delta)e(r,w).
\]
The upper bound mixes exact evaluation with the two constant outputs; the
matching lower bound couples two inputs witnessing each essential variable.
The same argument upgrades the fixed-word candidate-universe theorem.  If
$\SD(w)$ counts distinct substring values, then over all expressions using at
most $h$ predicate names,
\[
 \max_r \Qrand(r,w)=(1-2\delta)h\SD(w),
\]
and a semantic-span bound $s$ replaces $\SD(w)$ by $\SD_s(w)$.  For the
monotone-address family, sensitivity also gives randomized adaptive cost
$\Theta(\log E)$ for every fixed $\delta<1/2$, so the adaptivity separation
survives bounded error.

\smallskip
\noindent\textbf{Contribution Boundaries.}
The span circuit is an analysis interface, not a claim of a faster local
membership algorithm.  Our results separate symbolic parsing time, the number
of semantic values acquired, and sequential oracle latency.  The monotone
addressing Boolean idea and pointer-following lower-bound paradigm are known;
our claims concern their sharp calibration and restricted SemRE realizations,
the exact SemRE round tradeoff, the sparse fixed-vocabulary compiler, and the
randomized identities above.

\smallskip
\noindent\textbf{Organization.}
Section~\ref{sec:model} defines the oracle and round models, and
Section~\ref{sec:circuit} constructs the span circuit.
Section~\ref{sec:characterization} proves the decision-tree characterization
and the sharp two-round theorem.  Section~\ref{sec:round-hierarchy} establishes
the full round hierarchy, Section~\ref{sec:fixed-vocabulary} gives the packed
compiler, and Section~\ref{sec:lower-bounds} proves the randomized and
candidate-universe identities.  Section~\ref{sec:related} positions the
results, and Section~\ref{sec:conclusion} concludes.  Additional parameter
consequences and generic matching procedures appear in the appendices.

\section{Semantic Regular Expressions and Oracle Model}
\label{sec:model}

Let $\Sigma$ be a finite alphabet and let $\mathcal Q$ be a set of predicate
identifiers.  For a word $w=w_0\cdots w_{n-1}$ and indices
$0\le i\le j\le n$, write $w[i:j]=w_i\cdots w_{j-1}$; in particular,
$w[i:i]=\eps$.

\begin{definition}[Semantic regular expressions]
A \emph{semantic regular expression} over $\Sigma$ and $\mathcal Q$ is
generated by
\[
 r ::= \emptyset \mid \eps \mid a \mid r r \mid r\cup r \mid r^*
       \mid r\cap\langle q\rangle,
\]
where $a\in\Sigma$ and $q\in\mathcal Q$.  The first six constructs have their
usual regular-language semantics.  An oracle
$\Ocal:\mathcal Q\times\Sigma^*\to\bits$ interprets a semantic filter by
\[
 L(r\cap\langle q\rangle,\Ocal)
 =\{x\in L(r,\Ocal):\Ocal(q,x)=1\}.
\]
Thus the filter tests the entire string matched by its child expression.  The
membership problem asks whether $w\in L(r,\Ocal)$.
\end{definition}

For example, let $A=\bigcup_{a\in\Sigma}a$ and let $q_{\rm city}$ denote a
city-name predicate.  The expression
\[
 A^*(A^+\cap\langle q_{\rm city}\rangle)A^*
\]
accepts $w$ exactly when some nonempty factor $w[i:j]$ satisfies
$\Ocal(q_{\rm city},w[i:j])=1$.  The expression determines which spans are
possible; a matching algorithm determines which of their keys to query and in
what order.

The \emph{semantic depth} of an expression is the maximum number of filter
nodes on a root-to-leaf path.  We write $m=|r|$ and $n=|w|$.  Unless a bit-level
encoding is stated explicitly, $|r|$ is syntax-tree size: alphabet symbols and
predicate identifiers each count as one label.  Under a binary encoding, a
family using $e$ distinct identifiers may require an additional $O(\log e)$
bits per identifier; the fixed-vocabulary construction in
Section~\ref{sec:fixed-vocabulary} avoids this issue.

Because the grammar has no semantic negation, membership is monotone in the
oracle answers.  We omit complement, ordinary regular-expression intersection,
lookaround, and backreferences so that their succinctness costs do not enter
the semantic-query analysis.  Semantic negation would replace the monotone
span circuits below by general Boolean circuits.

\subsection{Atomic Oracle Access}

An \emph{atomic query} supplies one key $(q,x)$ and returns $\Ocal(q,x)$.  The
model has three assumptions.
\begin{enumerate}
  \item \emph{Extensionality.}  The answer depends on the predicate name and
  string value, not on the occurrence or coordinates of $x$ in $w$.
  \item \emph{Stability.}  Repeated queries to the same key return the same bit.
  \item \emph{No cross-key promises.}  Distinct keys have no promised
  relationship; a worst-case oracle may assign their bits independently.
\end{enumerate}

A deterministic matcher chooses its next query and final answer from $(r,w)$
and the query transcript.  It is \emph{adaptive} if later keys may depend on
earlier answers and \emph{nonadaptive} if all queried keys are fixed in
advance.  More generally, an \emph{$R$-round matcher} issues one parallel batch
in each of $R$ rounds; the batch in round $j$ may depend only on answers from
rounds $1,\ldots,j-1$.  Thus one round is nonadaptive.  We write
$\Qround{R}(r,w)$ for the minimum worst-case number of distinct queries made by
an exact deterministic $R$-round matcher and $\Qdet(r,w)$ when the number of
rounds is unrestricted.

An exact matcher must be correct for every oracle.  A randomized matcher with
error $\delta<1/2$ must, for every fixed oracle, return the correct answer with
probability at least $1-\delta$ over its private coins.  Its cost on that
oracle is the expected number of distinct keys queried, and its reported cost
is the maximum of this expectation over all oracles.  We write $\Qrand(r,w)$
for unrestricted randomized adaptivity and $\Qrandnonad(r,w)$ when the random
query set must be chosen before any answer is observed.

Stability makes memoization lossless, so each key is charged at most once.
Extensionality makes equal substring values at different positions share a
key.  Position- or context-sensitive predicates require coordinates or context
inside the key and therefore need not satisfy the distinct-value bounds below.
A matcher may request a key unrelated to membership, but
Theorem~\ref{thm:decision-tree} shows that such requests can be removed without
increasing deterministic or randomized cost.

Our primary work measure is the number of distinct atomic evaluations; round
complexity measures their sequentiality.  Local symbolic work and nonuniform
key costs are separate resources.  Appendix~\ref{sec:algorithms} returns to
those measures after the query-complexity results have been established.

\section{A Monotone Span-Circuit Representation}
\label{sec:circuit}

A SemRE may have exponentially many parses, so an analysis that explicitly
enumerates accepting parses is unsuitable even before oracle costs are
considered.  We instead collect all interval matches in one dynamic program.
The resulting circuit will serve as the interface between SemRE syntax and
Boolean query complexity.

Fix an expression $r$ and a word $w=w_0\cdots w_{n-1}$.  For every syntax node
$v$ of $r$ and interval $0\le i\le j\le n$, introduce a gate $M_v(i,j)$ with
the intended meaning
\[
 M_v(i,j)=1
 \quad\Longleftrightarrow\quad
 w[i:j]\in L(v,\Ocal).
\]
Equal substring values must receive the same variable even when they occur at
different positions.  We therefore preprocess $w$ with canonical substring
identifiers: two intervals receive the same identifier exactly when they spell
the same string.  For example, after constructing a suffix array and its LCP
array, one scan of the suffix order for each length groups equal factors in
$O(n^2)$ total time and space.  Circuit labels store these identifiers rather
than copying the strings.  For each distinct key, introduce an input variable
$X_{q,x}=\Ocal(q,x)$.  The leaf, union, concatenation, and filter gates are
defined as follows:
\begin{align*}
 M_{\emptyset}(i,j)&=0, &
 M_{\eps}(i,j)&=[i=j],\\
 M_a(i,j)&=[j=i+1\land w_i=a],\\
 M_{u\cup v}(i,j)&=M_u(i,j)\lor M_v(i,j),\\
 M_{uv}(i,j)&=\bigvee_{k=i}^{j}
       \bigl(M_u(i,k)\land M_v(k,j)\bigr),\\
 M_{u\cap\langle q\rangle}(i,j)&=
       M_u(i,j)\land X_{q,w[i:j]}.
\end{align*}
Here $[P]$ denotes the Boolean value of proposition $P$.

For Kleene star, set $M_{u^*}(i,i)=1$.  If $i<j$, define
\[
 M_{u^*}(i,j)=\bigvee_{k=i+1}^{j}
   \bigl(M_u(i,k)\land M_{u^*}(k,j)\bigr).
\]
The strict inequality $k>i$ makes this recurrence acyclic.  It does not lose
any accepting factorization: empty repetitions can be deleted from a
factorization of a nonempty word.  The output gate is $M_r(0,n)$.

We use a word-RAM representation in which interval endpoints, syntax-node
indices, predicate identifiers, and canonical substring identifiers occupy
$O(1)$ words.  We count bounded-fan-in Boolean gates.  Thus an $O(n)$-way
disjunction is expanded into $O(n)$ gates; equivalently, the same asymptotic
bound counts wires in an unbounded-fan-in presentation.

\begin{theorem}[Span-circuit theorem]
\label{thm:span-circuit}
For every expression $r$ and word $w$, the recurrences above construct a
monotone Boolean circuit $F_{r,w}=M_r(0,n)$ such that
\[
 F_{r,w}(\Ocal)=1
 \quad\Longleftrightarrow\quad
 w\in L(r,\Ocal).
\]
The circuit has size $O(mn^3)$ and can be evaluated in $O(mn^3)$ time and
$O(mn^2)$ space.
\end{theorem}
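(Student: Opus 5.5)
We prove correctness by structural induction on the syntax node $v$, with an inner induction on the interval length $j-i$ for star nodes. The invariant is
\[
 M_v(i,j)=1 \iff w[i:j]\in L(v,\Ocal)
\]
for every interval, where the input variables are set to $X_{q,x}=\Ocal(q,x)$.

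\textbf{Correctness.} The leaf cases $\emptyset$, $\eps$, $a$ are immediate from the definitions.

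For union, we apply the induction hypothesis to both children.

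For concatenation, $w[i:j]\in L(uv,\Ocal)$ iff some split point $k$ with $i\le k\le j$ has $w[i:k]\in L(u,\Ocal)$ and $w[k:j]\in L(v,\Ocal)$. This is exactly the disjunction in the recurrence.

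For a filter $u\cap\langle q\rangle$, membership requires $w[i:j]\in L(u,\Ocal)$ and $\Ocal(q,w[i:j])=1$. The second conjunct equals $X_{q,w[i:j]}$ because canonical substring identifiers assign the same variable to every interval spelling the same string. This is where extensionality is used.

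For star, the case $i=j$ holds since $\eps\in L(u^*)$. For $i<j$, a nonempty word lies in $L(u^*)$ iff it factors into finitely many pieces from $L(u)$. Deleting empty pieces, we may assume every piece is nonempty, so the first piece is $w[i:k]$ with $k>i$ and the remainder $w[k:j]\in L(u^*)$. The inner induction on length applies to $w[k:j]$, and the outer induction applies to $u$. The converse direction is direct.

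Monotonicity holds because the only gates are $\land$, $\lor$, constants, and variables. Acyclicity holds because each gate depends only on gates of proper subexpressions, or, for star, on the same node at a strictly shorter interval. So the circuit is well defined and $F_{r,w}=M_r(0,n)$ as claimed.

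\textbf{Size, time, and space.} There are $m\cdot O(n^2)$ gates $M_v(i,j)$. Concatenation and star gates expand into $O(n)$ bounded-fan-in gates each; all other gates use $O(1)$. This gives size $O(mn^3)$.

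The variables come from the canonical identifier preprocessing, which costs $O(n^2)$ time and space by the suffix-array/LCP argument in the text. There are at most $|\mathcal Q\text{ used in }r|\cdot O(n^2)$ distinct keys, which is within budget.

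For evaluation, process nodes bottom-up in postorder. Within a star node, process intervals by increasing length. Each gate value costs $O(n)$ time given the stored table, so the total time is $O(mn^3)$.

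Space is dominated by storing $O(mn^2)$ Boolean table entries. We never materialize the expanded $O(n)$-way disjunctions, which are computed on the fly, plus the $O(n^2)$ identifier table.

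\textbf{Main subtlety.} The only delicate step is star: we must check both that the strict-$k>i$ recurrence is acyclic and that it is complete. Completeness is handled by the empty-piece deletion argument above.
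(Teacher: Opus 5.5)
Your proposal is correct and follows the paper's proof: structural induction with an inner induction on interval length for star, justified by deleting empty factors so the strict split $k>i$ loses nothing, then counting $O(mn^2)$ node--interval pairs with $O(n)$ split points each and $O(mn^2)$ retained values. You spell out a few points the paper leaves implicit (acyclicity, the role of canonical identifiers in the filter case, and the bottom-up evaluation order), but these are elaborations of the same argument rather than a different route.
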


\begin{proof}
We prove the equivalence by structural induction on the syntax of $r$, using
increasing interval length for a star node.  The equations for leaves, union,
concatenation, and filtering follow directly from the language semantics.  For
star, every nonempty word in $L(u^*)$ has a first nonempty factor ending at
some $k>i$, followed by a suffix in $L(u^*)$; conversely, every disjunct gives
such a factorization.  This establishes correctness of the output gate.

Canonical substring identifiers are computed in $O(n^2)$ preprocessing time
and space as described above.  There are $O(mn^2)$ pairs consisting of a
syntax node and an input interval.
A concatenation or star entry examines $O(n)$ split points, whereas every other
entry has constant fan-in.  The circuit size and construction time are
therefore $O(mn^3)$.  Retaining one value for each syntax-node/interval pair
uses $O(mn^2)$ space.
\end{proof}

For certificate arguments, it is useful to unfold the same circuit
conceptually.  Nullable subexpressions make parse-based descriptions
ambiguous, so we use the acyclic recurrence itself.  A \emph{symbolic accepting
proof} chooses one child at each OR gate and retains both children at each AND
gate; its leaves are constant one or oracle variables, never constant zero.
Let $\Pi(r,w)$ be the finite set of these proof trees, and let $K(P)$ be the set
of distinct oracle keys occurring in a proof $P$.  Finiteness follows from the
acyclic interval recurrence, including the strict split used for star.  By
expanding the circuit into accepting proofs, we obtain
\[
 F_{r,w}=\bigvee_{P\in\Pi(r,w)}\ \bigwedge_{z\in K(P)} z.
\]
Thus the span circuit is a polynomial-size representation of a monotone DNF
with one term per symbolic accepting proof.  We use the circuit for symbolic
evaluation and the proof DNF for reasoning about witnesses and lower bounds.

\section{Sharp Two-Round Adaptivity}
\label{sec:characterization}

We first identify optimal semantic evaluation with decision-tree evaluation,
then construct an instance that attains the largest possible adaptivity gap
asymptotically and needs only two rounds.

Let $\mathcal K_{r,w}$ be the finite set of oracle keys appearing as input
variables of $F_{r,w}$.  For a Boolean function $f$, let $D(f)$ denote
deterministic decision-tree depth and let $D^{[R]}(f)$ denote the minimum
worst-case number of queries in an $R$-round deterministic decision tree.  Let
$\Rexp(f)$ denote the minimum,
over randomized decision trees with pointwise error at most $\delta$, of the
maximum over inputs of the expected number of queried variables.  This is the
expected-cost convention used throughout the paper; it differs from bounding
every randomized execution by a fixed depth~\cite{buhrman2002decisiontrees}.

\begin{theorem}[Decision-tree characterization]
\label{thm:decision-tree}
For every fixed expression $r$ and word $w$,
\[
 \Qdet(r,w)=D(F_{r,w})
 \quad\text{and}\quad
 \Qrand(r,w)=\Rexp(F_{r,w}).
\]
Moreover, the exact deterministic nonadaptive complexity equals the number of
essential variables of $F_{r,w}$.
\end{theorem}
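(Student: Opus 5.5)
The plan is to set up a two-way simulation between matchers for $(r,w)$ and decision trees on the key set $\mathcal K_{r,w}$, and to use Theorem~\ref{thm:span-circuit} as the bridge. By that theorem, $w\in L(r,\Ocal)$ iff $F_{r,w}(\Ocal|_{\mathcal K_{r,w}})=1$, so correctness of a matcher depends only on the restriction of the oracle to $\mathcal K_{r,w}$.

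The direction $\Qdet(r,w)\le D(F_{r,w})$ and $\Qrand(r,w)\le\Rexp(F_{r,w})$ is immediate. Any (randomized) decision tree for $F_{r,w}$ is run as a matcher: each variable query $X_{q,x}$ becomes the atomic query $(q,x)$. Extensionality and stability ensure that the variable's value is exactly $\Ocal(q,x)$, and each variable is charged once. Hence cost and error carry over oracle by oracle.

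For the converse, I will show that a matcher's queries outside $\mathcal K_{r,w}$ can be deleted without increasing cost.
\begin{itemize}
\item \emph{Deterministic case.} Given a matcher $A$, build a tree $T$ over $\mathcal K_{r,w}$. $T$ simulates $A$, and whenever $A$ asks a key $k\notin\mathcal K_{r,w}$, $T$ answers it with the fixed value $0$ and does not query it.
\item \emph{Why $T$ is correct.} By the no-cross-key-promises assumption, every assignment to $\mathcal K_{r,w}$ extends to an oracle that sets all outside keys to $0$. On that oracle $A$ follows exactly the path simulated by $T$, so $A$'s answer equals $F_{r,w}$ there. Since $F_{r,w}$ does not read outside keys, $T$'s output is correct for every assignment to $\mathcal K_{r,w}$.
\item \emph{Cost.} The number of real queries along each path does not increase, so $D(F_{r,w})\le\Qdet(r,w)$.
\item \emph{Randomized case.} Apply the same simulation coin by coin. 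For each fixed assignment $z$ to $\mathcal K_{r,w}$, the simulated tree on $z$ behaves exactly like $A$ on the oracle $\Ocal_z$ that extends $z$ by zeros. Its error probability is therefore at most $\delta$, and its expected number of queries is at most $A$'s expected cost on $\Ocal_z$, hence at most $A$'s worst-case cost. This gives $\Rexp(F_{r,w})\le\Qrand(r,w)$.
\end{itemize}
The same padding argument also preserves the round structure, which will be needed later for $D^{[R]}$.

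For the nonadaptive claim, let $S$ be the set of essential variables of $F_{r,w}$, i.e.\ those variables $z$ for which some assignment's output flips when only $z$ is changed.
\begin{itemize}
\item \emph{Upper bound.} Querying all of $S$ in one batch determines $F_{r,w}$, because $F_{r,w}$ depends only on $S$. The required $S$ can be computed from $(r,w)$, which the matcher may do before querying.
\item \emph{Lower bound.} By the reduction above, it suffices to consider a nonadaptive tree with fixed query set $Q\subseteq\mathcal K_{r,w}$. If some $z\in S$ is not in $Q$, take the two assignments witnessing the essentiality of $z$. They differ only at $z$, so they agree on $Q$ and receive the same answer, yet $F_{r,w}$ differs on them. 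This contradicts exactness, so $|Q|\ge|S|$.
\end{itemize}

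The main obstacle is the randomized simulation. One must make sure that the padding oracle is a single fixed extension, so that pointwise error and expected cost are compared on the same oracle, and that charging only distinct keys is consistent between the two models. Beyond that, the remaining steps are routine.
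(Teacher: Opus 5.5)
Your proposal is correct and follows the paper's route: answer every key outside $\mathcal K_{r,w}$ locally with the fixed value $0$, and identify the resulting matcher with a decision tree for $F_{r,w}$, fixing the coins one outcome at a time in the randomized case. The nonadaptive lower bound then comes from the pair of assignments witnessing essentiality. You also make explicit two points the paper leaves implicit: the zero extension is a legitimate oracle because there are no cross-key promises, and the padding preserves round structure.
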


\begin{proof}
We first eliminate queries outside $\mathcal K_{r,w}$.  Fix their answers to
zero.  Given any matcher $A$ and an assignment to $\mathcal K_{r,w}$, simulate
$A$ on the full oracle obtained by extending that assignment with these fixed
zeros.  Answer an out-of-support request locally rather than issuing an oracle
call.  Membership depends only on $\mathcal K_{r,w}$, so the simulation
preserves exact correctness or pointwise error and never increases the number
of actual queries.

An adaptive matcher restricted to $\mathcal K_{r,w}$ induces a decision tree
whose internal nodes are oracle variables, whose edges are answers, and whose
leaves return membership.  Uniform correctness requires this tree to compute
$F_{r,w}$, and conversely any decision tree for $F_{r,w}$ is an oracle matcher.
Fixing the private coins gives the randomized equivalence.  Finally, a
nonadaptive exact matcher must query every essential variable: omitting one
leaves two assignments that agree on all observed bits but have different
outputs.  Querying all essential variables is sufficient.
\end{proof}

Call a key \emph{essential} if changing only its value can change $F_{r,w}$,
and write $e(r,w)$ for the number of essential keys.  A basic counting bound
will provide the universal half of the sharp theorem.

\begin{lemma}[Support counting]
\label{lem:support-counting}
Every Boolean function with $E$ essential variables has deterministic
decision-tree depth at least $\lceil\log_2(E+1)\rceil$.
\end{lemma}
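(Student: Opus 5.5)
A decision tree of depth $d$ has at most $2^d-1$ internal nodes. So I would show that every essential variable must label at least one internal node of any decision tree computing $f$; this gives $E\le 2^d-1$, i.e. $d\ge\log_2(E+1)$. Since $d$ is an integer, $d\ge\lceil\log_2(E+1)\rceil$.

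The key step is the claim that an essential variable $x_i$ must appear somewhere in the tree. Suppose some variable $x_i$ labels no internal node. Because $x_i$ is essential, there are assignments $a$ and $a'$ differing only in coordinate $i$ with $f(a)\ne f(a')$. Follow the root-to-leaf path on input $a$. Every variable queried along it differs from $i$, so $a'$ gives the same answers and follows the identical path. Hence both inputs reach the same leaf and receive the same output, contradicting correctness. This is exactly the argument the paper already used in Theorem~\ref{thm:decision-tree} for nonadaptive matchers, applied here path by path.

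It remains to count. A binary tree of depth $d$ has at most $\sum_{\ell=0}^{d-1}2^\ell=2^d-1$ internal nodes, and the $E$ essential variables need distinct labels among them. Thus $E\le 2^d-1$.

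There is no real obstacle here. The only point needing care is that the count uses internal nodes, giving $2^d-1$ rather than $2^d$; this is what yields the $+1$ in $\log_2(E+1)$. The degenerate case $E=0$ gives the trivial bound $0$.
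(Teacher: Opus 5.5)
Your proposal is correct and follows the paper's proof: bound the internal nodes of a depth-$d$ tree by $2^d-1$ and note that every essential variable must label one of them, giving $E\le 2^d-1$. Your path-following argument for why an unqueried variable cannot be essential simply spells out the one-clause justification the paper gives.
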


\begin{proof}
A depth-$d$ binary decision tree has at most $2^d-1$ internal nodes.  Every
essential variable must label an internal node of every tree computing the
function; otherwise the output would be independent of that variable.  Hence
$E\le2^d-1$.
\end{proof}

\subsection{A Calibrated Monotone Address Function}

Fix an integer $E\ge2$.  Let
\[
 k=k_E=
 \min\left\{
  1\le k\le E-1:
  E-k\le\binom{k}{\lfloor k/2\rfloor}
 \right\},
 \qquad
 p=\left\lfloor\frac{k}{2}\right\rfloor,
 \qquad
 m=E-k.
\]
The set in the minimum is nonempty because $k=E-1$ is feasible.  Choose an
arbitrary family
\[
 \mathcal A\subseteq\binom{[k]}p,
 \qquad |\mathcal A|=m.
\]
The function has address variables $x_1,\ldots,x_k$ and one data variable
$y_S$ for each $S\in\mathcal A$:
\begin{equation}
\label{eq:monotone-address}
 \operatorname{MA}_{k,\mathcal A}(x,y)
 =\operatorname{Thr}_{p+1}(x)
  \lor
  \bigvee_{S\in\mathcal A}
  \left(y_S\land\bigwedge_{i\in S}x_i\right),
\end{equation}
where $\operatorname{Thr}_{p+1}(x)=1$ exactly when $|x|\ge p+1$.
Equivalently,
\[
 \operatorname{MA}_{k,\mathcal A}(x,y)=
 \begin{cases}
  0, & |x|<p,\\
  y_S, & |x|=p\text{ and }\operatorname{supp}(x)=S\in\mathcal A,\\
  0, & |x|=p\text{ and }\operatorname{supp}(x)\notin\mathcal A,\\
  1, & |x|>p.
 \end{cases}
\]
This is a partial central-layer version of monotone addressing.  The full
function is classical and is explicitly discussed by Kane as a monotone
depth-$d$ function depending on exponentially many variables
\cite{kane2013monotone}.  Our use of it is the exact support calibration and
the restricted SemRE realization below.

\begin{lemma}[Exact two-round complexity]
\label{lem:address-complexity}
The function $\operatorname{MA}_{k,\mathcal A}$ has $E=k+m$ essential
variables and satisfies
\[
 D^{[1]}(\operatorname{MA}_{k,\mathcal A})=E,
 \qquad
 D^{[2]}(\operatorname{MA}_{k,\mathcal A})
 =D(\operatorname{MA}_{k,\mathcal A})=k+1.
\]
\end{lemma}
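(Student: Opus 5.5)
We bound the complexity from both sides: first count the essential variables, then give a two-round algorithm using $k+1$ queries, and finally prove that every decision tree has depth at least $k+1$ by computing the multilinear degree.

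\emph{Essential variables.} Every $y_S$ with $S\in\mathcal A$ is essential. To see this, set $x=\mathbf 1_S$, so that $|x|=p$ and $\operatorname{supp}(x)=S$. Then the function equals $y_S$ by the case description. Every $x_i$ is essential as well. Since $p=\lfloor k/2\rfloor<k$, we can choose a weight-$p$ vector $x$ with $x_i=0$, and we set all $y$ to zero. This input has value $0$. Flipping $x_i$ gives weight $p+1$, and the value becomes $1$. Hence there are exactly $k+m=E$ essential variables.

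\emph{One-round cost.} The last sentence of the proof of Theorem~\ref{thm:decision-tree} is purely Boolean: a nonadaptive exact tree must query every essential variable, and querying all of them suffices. Therefore $D^{[1]}=E$.

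\emph{Two-round upper bound.} In round one, query all $k$ address bits. If $|x|\ne p$, or if $|x|=p$ but $\operatorname{supp}(x)\notin\mathcal A$, the case description already determines the output. Otherwise $\operatorname{supp}(x)=S\in\mathcal A$, and in round two we query the single variable $y_S$. This uses at most $k+1$ queries, so
\[
 D(\operatorname{MA}_{k,\mathcal A})\le D^{[2]}(\operatorname{MA}_{k,\mathcal A})\le k+1 .
\]

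\emph{Depth lower bound.} This is the main step, and we prove it by the standard fact $D(f)\ge\deg(f)$, where $\deg$ is the degree of the unique multilinear real polynomial representing $f$. This fact holds because each leaf of a depth-$d$ tree contributes a product of at most $d$ factors $x$ or $1-x$. Fix $S\in\mathcal A$ and let $U=[k]\cup\{y_S\}$. By Möbius inversion, the coefficient of the monomial $y_S\prod_{i\le k}x_i$ is
\[
 c_U=\sum_{V\subseteq U}(-1)^{|U|-|V|}f(\mathbf 1_V),
\]
where all variables outside $V$ are set to zero. Pair each $T\subseteq[k]$ with the two sets $V=T$ and $V=T\cup\{y_S\}$. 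These two evaluations agree, both equal to $[|T|>p]$, except when $T=S$: there the value is $1$ with $y_S$ present and $0$ without it. The paired terms cancel, leaving
\[
 c_U=(-1)^{k-p}\neq0 .
\]
Hence $\deg f=k+1$, which gives $D\ge k+1$. Combined with the upper bound, $D=D^{[2]}=k+1$.

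I expect the only delicate point to be the sign bookkeeping in this cancellation, since it is the entire lower-bound argument.
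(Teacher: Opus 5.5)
Your proof is correct and follows the paper's route: the same two-round algorithm (all $k$ address bits, then the single $y_S$), the same essentiality witnesses, and the bound $D\ge\deg$ obtained from a nonzero coefficient on $y_S\prod_{i\le k}x_i$. Your Möbius-inversion pairing is an explicit form of the paper's observation that the discrete derivative in $y_S$ equals $\prod_{i\in S}x_i\prod_{j\notin S}(1-x_j)$, whose top coefficient is $(-1)^{k-p}$; note that it directly gives only $\deg\ge k+1$, with equality coming from $\deg\le D\le k+1$.
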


\begin{proof}
For the upper bound, query all $k$ address variables in the first round.  If
$|x|<p$, reject; if $|x|>p$, accept; and if $|x|=p$ with support outside
$\mathcal A$, reject.  The only remaining case has
$\operatorname{supp}(x)=S\in\mathcal A$, in which case querying $y_S$ in the
second round determines the answer.  Thus the worst-case cost is $k+1$.

For the lower bound, fix $S\in\mathcal A$.  The discrete derivative with
respect to $y_S$ is
\[
 \Delta_{y_S}\operatorname{MA}_{k,\mathcal A}
 =\prod_{i\in S}x_i\prod_{j\notin S}(1-x_j).
\]
Indeed, $y_S$ matters exactly at the address $x=1_S$.  The right-hand side has
a nonzero monomial of degree $k$, so the unique multilinear polynomial for
$\operatorname{MA}_{k,\mathcal A}$ has degree at least $k+1$.  Deterministic
decision-tree depth is at least exact polynomial degree
\cite{buhrman2002decisiontrees,jukna2012boolean}, proving
$D(\operatorname{MA}_{k,\mathcal A})\ge k+1$.

Every data variable $y_S$ is essential at $x=1_S$.  Every address variable
$x_i$ is essential after all data variables are set to zero: set exactly $p$
other address variables to one and flip $x_i$ across the threshold.  Hence all
$E$ variables are essential, and exact one-round evaluation costs $E$ by
Theorem~\ref{thm:decision-tree}.
\end{proof}

\subsection{A Linear-Size Unary SemRE}

We now realize Equation~\eqref{eq:monotone-address} with unit-length semantic
spans.  Use distinct predicate names and define
\[
 X_i=a\cap\langle q_i\rangle,
 \qquad
 Y_S=a\cap\langle q_S'\rangle.
\]
For $U\subseteq[k]$, let
\[
 P_U=Z_1^U\cdots Z_k^U,
 \qquad
 Z_i^U=
 \begin{cases}
  X_i,&i\in U,\\
  a,&i\notin U.
 \end{cases}
\]
Take $w_E=a^{k+1}$ and
\begin{equation}
\label{eq:address-semre}
 r_E=
 \bigcup_{T\in\binom{[k]}{p+1}} P_Ta
 \;\cup\!
 \bigcup_{S\in\mathcal A} P_SY_S.
\end{equation}
The first union computes the threshold term, while the second contains one
addressed data term for each $S\in\mathcal A$.

The flat expression has size $O(kE)$.  Factoring common prefixes removes the
factor $k$.  We use the following exact trie count.

\begin{lemma}[Constant-weight prefix trie]
\label{lem:constant-weight-trie}
The prefix trie of all binary strings containing exactly $u$ ones and $v$
zeros has
\[
 T(u,v)=\binom{u+v+2}{u+1}-1
\]
nodes, including the root.
\end{lemma}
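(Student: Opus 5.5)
We count nodes level by level. A node of the trie is a prefix $z$ of some string in the set, i.e., a binary string with at most $u$ ones and at most $v$ zeros; conversely, every such string extends to a member by appending the missing ones and zeros. The nodes are therefore exactly the binary strings with $i\le u$ ones and $j\le v$ zeros. The empty string, corresponding to the root, is included. Hence
\[
 T(u,v)=\sum_{i=0}^{u}\sum_{j=0}^{v}\binom{i+j}{i}.
\]

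The remaining step is to evaluate this double sum in closed form. First sum over $j$ using the hockey-stick identity $\sum_{j=0}^{v}\binom{i+j}{i}=\binom{i+v+1}{i+1}$. This leaves $\sum_{i=0}^{u}\binom{i+v+1}{v}$. A second application of the hockey-stick identity gives $\sum_{t=v}^{u+v+1}\binom{t}{v}=\binom{u+v+2}{v+1}$, after subtracting the missing term $t=v$. Keeping track of this boundary term is where the $-1$ comes from. Concretely, $\sum_{i=0}^{u}\binom{i+v+1}{v}=\binom{u+v+2}{v+1}-\binom{v}{v}$, since the full sum from $t=v$ includes the term $\binom{v}{v}=1$ at $t=v$, which is not of the form $i+v+1$. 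Finally, symmetry gives $\binom{u+v+2}{v+1}=\binom{u+v+2}{u+1}$, so $T(u,v)=\binom{u+v+2}{u+1}-1$.

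As a sanity check, take $u=v=0$: the trie is just the root, and indeed $\binom{2}{1}-1=1$. Alternatively, one can verify the formula by the recurrence $T(u,v)=1+T(u-1,v)+T(u,v-1)$, where a term is omitted when its argument is negative. This recurrence says the root has a $1$-child subtrie exactly when $u>0$ and a $0$-child subtrie exactly when $v>0$. It is consistent with Pascal's rule applied to $\binom{u+v+2}{u+1}$ once one notes $\binom{u+v+1}{u}+\binom{u+v+1}{u+1}=\binom{u+v+2}{u+1}$, and it handles the boundary cases $u=0$ or $v=0$ directly. The only delicate point is the off-by-one bookkeeping in the second summation, and the recurrence check guards against it.
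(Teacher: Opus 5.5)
Your proof is correct, but your main argument takes a different route from the paper's.

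\textbf{The paper's route.} The paper never identifies the node set explicitly. It splits on the first symbol to obtain $T(u,v)=1+T(u-1,v)+T(u,v-1)$ with boundary values $T(u,0)=u+1$ and $T(0,v)=v+1$. It then checks, via Pascal's identity, that $\binom{u+v+2}{u+1}-1$ satisfies this recurrence. That is exactly the argument you relegate to a consistency check at the end.

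\textbf{Your route.} Your primary argument characterizes the nodes directly as all binary strings with at most $u$ ones and at most $v$ zeros. This gives $T(u,v)=\sum_{i=0}^{u}\sum_{j=0}^{v}\binom{i+j}{i}$, which you evaluate with two applications of the hockey-stick identity. You correctly identify the omitted term $\binom{v}{v}=1$ as the source of the $-1$.

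\textbf{Comparison.} Your version is more explicit: it exhibits the trie level by level and explains the $-1$ as a boundary term. The paper's induction is shorter and needs only Pascal's rule plus the boundary checks.

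\textbf{Wording.} The sentence claiming the hockey-stick identity gives $\sum_{t=v}^{u+v+1}\binom{t}{v}=\binom{u+v+2}{v+1}$ ``after subtracting the missing term'' conflates the full identity with the truncated sum. Your ``Concretely'' sentence states the step correctly, so keep that one and drop the garbled version.
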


\begin{proof}
The two first-symbol choices give
$T(u,v)=1+T(u-1,v)+T(u,v-1)$, with
$T(u,0)=u+1$ and $T(0,v)=v+1$.  Pascal's identity verifies the displayed
solution.
\end{proof}

Interpreting one as $X_i$ and zero as the literal $a$ at position $i$, a trie
becomes a factored regular expression with constant syntax overhead per node.
The threshold layer in Equation~\eqref{eq:address-semre} uses
$T(p+1,k-p-1)$ nodes.  The address prefixes of the data layer form a subtrie
of the full layer with $p$ ones and use at most $T(p,k-p)$ nodes, plus one
$Y_S$ leaf per $S\in\mathcal A$.  Both trie counts are
$O\bigl(\binom{k}{p}\bigr)$.

Minimality of $k$ gives
\[
 \binom{k}{p}\le 2\binom{k-1}{\lfloor(k-1)/2\rfloor}<2E
\]
for $k>1$, with the case $k=1$ immediate.  Thus the factored expression has
size $O(E)$.  Conversely, its $E$ distinct predicate names must all occur, so
$|r_E|=\Theta(E)$.

We can now state the headline theorem.

\begin{theorem}[Sharp two-round adaptivity]
\label{thm:sharp-adaptivity}
For every integer $E\ge2$, the unary instance $(r_E,w_E)$ above is star-free,
has semantic depth one, uses only unit-length semantic spans, and satisfies
\[
 e(r_E,w_E)=E,
 \qquad
 |r_E|=\Theta(E),
 \qquad
 \Qround{1}(r_E,w_E)=E,
\]
whereas
\[
 \Qround{2}(r_E,w_E)=\Qdet(r_E,w_E)=d_E
\]
with
\[
 d_E=\log_2E+\frac12\log_2\log_2E+O(1).
\]
Consequently,
\[
 \Gamma(E)=
 \max_{e(r,w)=E}
 \frac{\Qround{1}(r,w)}{\Qdet(r,w)}
 =(1+o(1))\frac{E}{\log_2E}.
\]
\end{theorem}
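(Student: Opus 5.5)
The proof has three parts: show $F_{r_E,w_E}=\operatorname{MA}_{k,\mathcal A}$ under the identification $x_i=X_{q_i,a}$, $y_S=X_{q'_S,a}$; transfer Lemma~\ref{lem:address-complexity} to oracle costs via Theorem~\ref{thm:decision-tree}; then estimate $k_E$ and combine with Lemma~\ref{lem:support-counting}.

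\textbf{Identifying the function.} Since $w_E=a^{k+1}$ and every branch of Equation~\eqref{eq:address-semre} has length exactly $k+1$, each branch is matched only by aligning its $i$th factor with position $i$. The span circuit of Theorem~\ref{thm:span-circuit} then evaluates $P_Ta$ to $\bigwedge_{i\in T}x_i$ and $P_SY_S$ to $y_S\land\bigwedge_{i\in S}x_i$. The union of the threshold branches is $\bigvee_{|T|=p+1}\bigwedge_{T}x_i=\operatorname{Thr}_{p+1}(x)$, so $F_{r_E,w_E}=\operatorname{MA}_{k,\mathcal A}$. Distinct predicate names give distinct keys even though every span equals $a$. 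Prefix factoring preserves the language, so the factored expression computes the same function. Star-freeness, semantic depth one and unit-length spans are visible from the syntax, and $|r_E|=\Theta(E)$ was shown above. Lemma~\ref{lem:address-complexity} with Theorem~\ref{thm:decision-tree} then gives $e=E$, $\Qround{1}=E$ and $\Qround{2}=\Qdet=k_E+1=:d_E$. Here we use that the $R$-round matcher cost equals $D^{[R]}$, by the same simulation argument as in Theorem~\ref{thm:decision-tree}.

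\textbf{Asymptotics of $k_E$.} This is the only nontrivial estimate. Write $C(k)=\binom{k}{\lfloor k/2\rfloor}$, so $k_E$ is the least $k$ with $C(k)+k\ge E$. By Stirling, $C(k)=\Theta(2^k/\sqrt k)$, with the ratio bounded above and below by absolute constants for all $k\ge1$.
\begin{itemize}
\item Minimality gives $C(k-1)+k-1<E$, hence $2^{k}/\sqrt{k}\le cE$. Taking logarithms yields $k\le\log_2E+\tfrac12\log_2k+O(1)$. A crude first pass gives $k\le2\log_2E$, and substituting back gives $k\le\log_2E+\tfrac12\log_2\log_2E+O(1)$.
\item Feasibility gives $E\le C(k)+k\le c'2^k/\sqrt k$, since $k$ is dominated by $C(k)$ up to a constant. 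Hence $k\ge\log_2E+\tfrac12\log_2k-O(1)$. Using $k\ge\log_2 E-O(1)$, which follows from $E\le c''2^k$, we get $k\ge\log_2E+\tfrac12\log_2\log_2E-O(1)$.
\end{itemize}
Small $E$ is absorbed into the $O(1)$ terms. So $d_E=k_E+1$ has the claimed form. The care needed is in making the $O(1)$ terms uniform, which the absolute Stirling bounds on $C(k)\sqrt k/2^k$ handle.

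\textbf{The ratio $\Gamma(E)$.} For the lower bound, the family gives $\Gamma(E)\ge E/d_E=(1+o(1))E/\log_2E$, since $d_E=(1+o(1))\log_2E$. For the upper bound, take any instance with $e(r,w)=E$. Theorem~\ref{thm:decision-tree} gives $\Qround{1}=E$, and Lemma~\ref{lem:support-counting} gives $\Qdet\ge\lceil\log_2(E+1)\rceil\ge\log_2E$. Hence the ratio is at most $E/\log_2E$, and the two bounds together give $\Gamma(E)=(1+o(1))E/\log_2E$.
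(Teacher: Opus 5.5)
Your proposal is correct and follows essentially the same route as the paper. You identify $F_{r_E,w_E}$ with $\operatorname{MA}_{k,\mathcal A}$, import the query and essentiality claims from Lemma~\ref{lem:address-complexity} and Theorem~\ref{thm:decision-tree}, pin down $k_E$ through two-sided Stirling bounds on $\binom{k}{\lfloor k/2\rfloor}$ coming from feasibility and minimality, and close $\Gamma(E)$ with Lemma~\ref{lem:support-counting}. Your branch-alignment argument and your note that $R$-round matcher cost equals $D^{[R]}$ only spell out steps the paper leaves implicit.
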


\begin{proof}
The construction induces $\operatorname{MA}_{k,\mathcal A}$, so all query and
essentiality claims follow from Lemma~\ref{lem:address-complexity}; the syntax
claims follow from the trie factorization above.

It remains to estimate $k$.  Let
$B_k=\binom{k}{\lfloor k/2\rfloor}$.  Feasibility and minimality give
$E-k\le B_k<2E+2$.  Since $B_k\ge2^k/(k+1)$, this first implies
$k=O(\log E)$ and hence $B_k=\Theta(E)$.  Stirling's formula gives
\[
 B_k=\Theta\!\left(\frac{2^k}{\sqrt{k}}\right).
\]
Taking logarithms and using $k=\Theta(\log E)$ yields
\[
 k=\log_2E+\frac12\log_2\log_2E+O(1).
\]
Lemma~\ref{lem:address-complexity} gives $d_E=k+1$, which changes only the
$O(1)$ term.

For an arbitrary instance with $E$ essential keys, exact one-round cost is
$E$, while Lemma~\ref{lem:support-counting} gives unrestricted cost at least
$\lceil\log_2(E+1)\rceil$.  Therefore
\[
 \Gamma(E)\le
 \frac{E}{\lceil\log_2(E+1)\rceil}
 =(1+o(1))\frac{E}{\log_2E}.
\]
The constructed instance gives the matching lower bound with the same leading
constant.
\end{proof}

\section{A Complete Round Hierarchy}
\label{sec:round-hierarchy}

The sharp construction shows that two rounds can already realize the largest
possible adaptivity gap.  We now repurpose the earlier guarded recursion to
answer a different question: how does the optimum vary at every intermediate
round budget?

Let
\[
 r_0=a\cap\langle q_z\rangle,
 \qquad
 f_0(z)=z,
 \qquad
 w_0=a.
\]
Given two renamed copies $r_t^{(0)}$ and $r_t^{(1)}$ with disjoint predicate
sets, introduce fresh one-letter filter atoms
\[
 X=a\cap\langle q_x\rangle,
 \qquad
 Y=a\cap\langle q_y\rangle
\]
and define, with $a^0=\eps$,
\begin{equation}
\label{eq:guarded-recursion}
 r_{t+1}=Yr_t^{(0)}\ \cup\ XYa^t\ \cup\ Xr_t^{(1)}.
\end{equation}
On $w_{t+1}=a^{t+2}$, the induced function is
\begin{equation}
\label{eq:guarded-function}
 f_{t+1}=y f_t^{(0)}\lor xy\lor x f_t^{(1)}.
\end{equation}
The two guard answers have a simple operational meaning:
\[
 (x,y)=(0,0)\Rightarrow0,
 \quad
 (0,1)\Rightarrow f_t^{(0)},
 \quad
 (1,0)\Rightarrow f_t^{(1)},
 \quad
 (1,1)\Rightarrow1.
\]

\begin{lemma}[Guarded-family baseline]
\label{lem:guarded-baseline}
For every $t\ge0$, the instance $(r_t,w_t)$ is unary, star-free, and of
semantic depth one; every semantic span has length one.  It has
\[
 e_t=3\cdot2^t-2
\]
essential keys, syntax size $\Theta(e_t)$, and exact unrestricted cost
\[
 \Qdet(r_t,w_t)=2t+1.
\]
\end{lemma}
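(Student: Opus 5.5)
We argue by induction on $t$, establishing the syntactic properties, the induced function, the essential-key count, and the exact depth together. The syntactic claims are immediate from Equation~\eqref{eq:guarded-recursion}. Each step adds only the constant-size atoms $X$, $Y$, and $a^t$ around two renamed copies. No star is introduced, and filters occur only on the single letters $X$, $Y$, and the base atom, so semantic depth is one and every span has length one.

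To justify Equation~\eqref{eq:guarded-function}, we check lengths on $w_{t+1}=a^{t+2}$. By induction $r_t$ matches only the single length $t+1$. Hence $Yr_t^{(0)}$, $XYa^t$, and $Xr_t^{(1)}$ each have length exactly $t+2$, and the union induces $yf_t^{(0)}\lor xy\lor xf_t^{(1)}$. Since the predicate sets are disjoint, the two copies share no keys.

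\textbf{Essential keys.} We show every key of $f_{t+1}$ is essential. The key $x$ is essential via the assignment $y=1$, $f^{(0)}=0$, where the output equals $x$. Symmetrically, $y$ is essential via $x=1$, $f^{(1)}=0$. A key $z$ of copy $(0)$ that is essential for $f_t^{(0)}$ stays essential under $(x,y)=(0,1)$, where $f_{t+1}=f_t^{(0)}$; copy $(1)$ is handled the same way with $(1,0)$. Therefore $e_{t+1}=2e_t+2$ with $e_0=1$, which solves to $e_t=3\cdot2^t-2$. For syntax size, $s_{t+1}=2s_t+O(t)$ gives $O(2^t)=O(e_t)$. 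The matching $\Omega(e_t)$ bound holds because every essential predicate name must occur in the expression.

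\textbf{Upper bound on cost.} Query $x$ and $y$. Under $(0,0)$ or $(1,1)$ we answer immediately; otherwise we recurse into the selected subinstance. This gives depth $2+(2t+1)=2t+3=2(t+1)+1$, with base case $D(f_0)=1$.

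\textbf{Lower bound on cost.} This is the main step. I would use an adversary that forces $2t+1$ queries while keeping the function undetermined. For a tree for $f_{t+1}$, the adversary answers the first of $x,y$ that is queried with $1$ if it is $x$ and with $0$ if it is $y$, or more generally so that the pair ends at $(1,0)$ or $(0,1)$ once both are queried. Queries into the non-selected copy are answered arbitrarily; they only add cost. Within the selected copy, the adversary plays the inductive adversary for $f_t$.

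The obstacle is interleaving. Before both guards are fixed, the tree may query keys inside the copies, and the adversary must not yet commit to a copy. The fix is to keep both copies alive: the adversary runs the inductive adversary separately on each copy until both guards are queried. When the second guard is queried, it picks the copy with fewer queries so far. The function stays undetermined until the two guards are queried and the chosen copy's adversary forces its $2t+1$ queries, so the total is at least $2+(2t+1)$.

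Alternatively, a cleaner route is a degree argument matching Lemma~\ref{lem:address-complexity}. By induction $f_t$ has exact degree $2t+1$. In $f_{t+1}$, the polynomial contains $y(1-x)f_t^{(0)}+x(1-y)f_t^{(1)}+xy$, so the term $-xyf_t^{(0)}$ has degree $2t+3$ and cannot cancel, because it involves variables disjoint from those of the other terms. Deterministic depth is at least exact degree, so $D(f_{t+1})\ge 2t+3$. I would use this degree argument as the primary proof, since it avoids the adversary bookkeeping.
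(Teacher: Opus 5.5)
Your proposal is correct and is essentially the paper's proof. It uses the same recurrence $e_{t+1}=2e_t+2$ with every key shown essential, the same query-$x$-then-$y$ upper bound, and the same multilinear-degree lower bound, in which the degree-$(2t+3)$ monomials of $-xyA$ cannot cancel. Exposing the guards through a rejecting assignment in a copy, rather than the paper's accepting one, works equally well.

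Keep the degree argument as the actual proof, because the adversary sketch as written fails. Under the promise $(x,y)\in\{(0,1),(1,0)\}$, the first guard answer already commits to a copy, so nothing can be ``picked'' at the second guard. Moreover, after $y=0$, an algorithm that fully evaluates copy $(1)$ stops after $2t+2$ queries whenever that copy evaluates to $0$. A correct adversary would therefore also have to steer the selected copy's final value.
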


\begin{proof}
The recurrence for the number of variables is $e_0=1$ and
$e_{t+1}=2e_t+2$, giving the displayed formula.  Every variable is essential:
the assignments $(x,y)=(0,1)$ and $(1,0)$ expose the left and right recursive
copies, while suitable accepting assignments in those copies expose the fresh
guards.

Querying $x$ and then $y$ either decides the output or selects one recursive
copy, so the recurrence $T_{t+1}\le T_t+2$ gives $T_t\le2t+1$.  For the
matching lower bound, let $p_t$ be the multilinear polynomial for $f_t$.  If
$A$ and $B$ are the polynomials for the two recursive copies, then
\[
 p_{t+1}=yA+xy+xB-xyA-xyB.
\]
The degree-$\deg(p_t)+2$ monomials in $xyA$ and $xyB$ use disjoint recursive
variable sets and cannot cancel.  Thus $\deg(p_t)=2t+1$, and decision-tree
depth is at least polynomial degree.  The syntax-size recurrence is
$S_{t+1}=2S_t+O(t)$ because of the padding $a^t$, so $S_t=O(2^t)$; the
$e_t$ distinct predicate names give the reverse bound.
\end{proof}

\begin{theorem}[Round hierarchy]
\label{thm:round-hierarchy}
For the guarded family and every $1\le R\le t$,
\[
 \Qround{R}(r_t,w_t)
 =\Theta\!\left(R\,2^{t/R}\right)
 =\Theta\!\left(R\,e_t^{1/R}\right).
\]
In particular,
\[
 \Qround{1}=\Theta(e_t),
 \qquad
 \Qround{2}=\Theta(\sqrt{e_t}),
 \qquad
 \Qround{3}=\Theta(e_t^{1/3}),
 \qquad
 \Qround{t}=\Theta(t).
\]
\end{theorem}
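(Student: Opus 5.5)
The proof has two halves, both organized around the recursive tree of guard pairs. By Theorem~\ref{thm:decision-tree}, the $R$-round cost equals $D^{[R]}(f_t)$, so we argue entirely about decision trees. Unfolding Equation~\eqref{eq:guarded-function}, $f_t$ is a complete binary tree of height $t$ whose internal nodes carry a guard pair $(x,y)$ and whose leaves are single variables $z$. Evaluation walks down from the root. At each node, $(0,0)$ rejects, $(1,1)$ accepts, and the other two answers select a child; reaching a leaf outputs its $z$.

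\textbf{Upper bound.} Split the height $t$ into $R$ consecutive levels-blocks of heights $t_1,\dots,t_R$, each $\lceil t/R\rceil$ or $\lfloor t/R\rfloor$.
\begin{itemize}
\item In round $1$, query every guard pair in the top block: $2(2^{t_1}-1)$ keys. These answers determine the path through the block, or stop the computation early.
\item In round $j$, the path has reached one root of block $j$, so query all guards of that single subtree, at cost $2(2^{t_j}-1)$.
\item The final leaf $z$ must also be read. Fold it into the last round by querying all $2^{t_R}$ leaves below the last block together with its guards.
\end{itemize}
The total is $O(R\,2^{\lceil t/R\rceil})=O(R\,2^{t/R})$, since $2^{1}$ is a constant factor. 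When $R\le t$ each block has height at least $1$, so the $R$ rounds are genuinely used. Finally, $e_t=\Theta(2^t)$ gives $2^{t/R}=\Theta(e_t^{1/R})$ for all $R\ge1$, because $(3/2)^{1/R}\le 3$.

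\textbf{Lower bound.} Restrict the inputs to a promise family in which every guard pair is $(0,1)$ or $(1,0)$, so each node is a pointer to one child, and the output is the $z$ at the end of the unique root-to-leaf pointer path. A lower bound on this restriction transfers to $f_t$. The adversary works round by round. Before round $j$ it maintains a current node $v_j$ at depth $d_j$ such that no key in the subtree of $v_j$ has been queried yet; initially $v_1$ is the root with $d_1=0$. A batch of $q$ queries in round $j$ touches at most $q$ keys inside that subtree.

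The adversary then sets the pointers below $v_j$ to descend $s_j$ levels to a node $v_{j+1}$ whose subtree received no queries. Such a node exists whenever the number of queried keys is below $2^{s_j}$, since the $2^{s_j}$ descendants at depth $s_j$ have disjoint subtrees. The adversary also sets the answers on the queried guards consistently with the chosen path. Take $s_j=\lfloor\log_2(q_j+1)\rfloor$, adjusted so that queries sitting on the path nodes themselves are counted; each path node holds only $2$ keys, so this costs only a constant factor.

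If the matcher stops with $\sum_j s_j<t$, or with the final leaf unqueried, the adversary can still flip an untouched leaf $z$, so the answer is not determined. Correctness therefore forces $\sum_{j=1}^R \log_2(q_j+2)\ge t-O(R)$. Concavity of $\log$ (AM--GM) gives $\sum_j q_j\ge R\,(2^{(t-O(R))/R}-2)$. This is $\Omega(R\,2^{t/R})$ when $t/R$ exceeds a suitable constant.

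\textbf{Main obstacle.} The delicate step is the additive $O(R)$ loss, together with small $t/R$. When $t/R=O(1)$, the target $R\,2^{t/R}$ is $\Theta(t)$, and the bound $\Qdet=2t+1$ from Lemma~\ref{lem:guarded-baseline} already supplies the lower bound, since $\Qround{R}\ge\Qdet$. When $t/R$ is large, the $2^{-O(1)}$ factor is absorbed into the constant. The care needed in this adversary is to count precisely the keys queried on the path nodes, which the adversary must answer to fix the pointer, and to make sure the descent per round is at least about $\log_2 q_j - O(1)$ rather than smaller.
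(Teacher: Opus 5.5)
Your proposal follows the paper's proof essentially step by step: block-by-block guard querying for the upper bound, and for the lower bound the pointer promise, an adversary preserving an untouched subtree, concavity of the logarithm, and the fallback $\Qround{R}\ge\Qdet=2t+1$ when $t/R=O(1)$. One correction is needed: the adversary's descent must be $s_j=\lceil\log_2(q_j+1)\rceil$, not the floor, because with the floor one can have $2^{s_j}\le q_j$ and a batch can then hit every depth-$s_j$ subtree; by contrast, keys on the path nodes need no separate accounting, since a key above depth $s_j$ lies in none of the $2^{s_j}$ disjoint subtrees, and the resulting loss of at most one level per round is exactly the $O(R)$ slack you already budget.
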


\begin{proof}
\emph{Upper bound.}
Set $b=\lceil t/R\rceil$.  In one round, query both guards at every recursive
node in the next $b$ levels of the current subinstance.  There are at most
$2^b-1$ such nodes, so the round uses $2(2^b-1)=O(2^b)$ queries.  The four
guard cases above either determine the output or identify exactly one
subinstance at the block boundary.  Repeat this for the first $R-1$ rounds.
The remaining height is at most $b$, so the final round queries all variables
of that subinstance, again using $O(2^b)$ queries.  The total is
$O(R2^b)=O(R2^{t/R})$.

\emph{Lower bound.}
Restrict the input at every recursive node to the promise
\[
 (x,y)\in\{(0,1),(1,0)\}.
\]
Equation~\eqref{eq:guarded-function} then becomes pointer evaluation on a
complete binary tree of height $t$: each internal node stores a direction bit,
and the output is the data bit at the reached leaf.  Any algorithm for $f_t$
must also solve this promised problem.

Consider one batch containing $q$ queries while the adversary has preserved a
completely untouched subtree of height $h$.  Let
$d=\lceil\log_2(q+1)\rceil$.  If $d<h$, the subtree has $2^d>q$ descendants at
depth $d$, so at least one descendant subtree receives no query from the
batch.  The adversary assigns the queried direction bits so that the pointer
path enters such a subtree; all queries outside it become irrelevant.  Thus
one batch can force progress through at most $d$ levels while leaving the
remaining subtree untouched.

Let $q_j$ be the batch size in round $j$ along the resulting adversarial
transcript.  If at some round
$\lceil\log_2(q_j+1)\rceil$ is at least the remaining height, the inequality
below is already satisfied.  Otherwise the adversary continues as above.  If
the sum were smaller than $t$ after all $R$ rounds, a nontrivial untouched
subtree would remain and two completions could choose different reached-leaf
values.  Correctness therefore requires
\[
 t\le\sum_{j=1}^R\left\lceil\log_2(q_j+1)\right\rceil.
\]
Writing $Q=\sum_j q_j$, concavity of the logarithm gives
\[
 t\le R+R\log_2\!\left(\frac QR+1\right),
\]
and hence
\[
 Q\ge R\left(2^{t/R-1}-1\right).
\]
When $t/R\ge2$, this is $\Omega(R2^{t/R})$.  When
$1\le t/R<2$, Lemma~\ref{lem:guarded-baseline} gives
$Q\ge2t+1=\Omega(R2^{t/R})$.  This proves the lower bound uniformly.

Finally, $e_t=\Theta(2^t)$ implies
$2^{t/R}=\Theta(e_t^{1/R})$ for the stated range of $R$.
\end{proof}

The hierarchy has a direct latency interpretation.  Let
$\operatorname{Rounds}_B(r,w)$ be the minimum worst-case number of rounds when
each round may contain at most $B$ independently specified atomic queries.

\begin{corollary}[Batch-cap tradeoff]
\label{cor:batch-cap}
For every $B\ge1$,
\[
 \operatorname{Rounds}_B(r_t,w_t)
 =\Theta\!\left(
 \max\left\{
  1,\frac{\log(e_t+1)}{\log(B+1)}
 \right\}
 \right).
\]
\end{corollary}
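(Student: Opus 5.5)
The plan is to reuse both halves of the proof of Theorem~\ref{thm:round-hierarchy}, now with a per-round cap $q_j\le B$ instead of a total budget. Recall $e_t=3\cdot2^t-2$, so $\log(e_t+1)=\Theta(t+1)$. The target is therefore
\[
\operatorname{Rounds}_B=\Theta\!\left(\max\left\{1,\tfrac{t}{\log(B+1)}\right\}\right),
\]
where the constant $1$ covers the case $t=0$ and the case of large $B$. We will separate the regimes $B\ge e_t$ and $B<e_t$.

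\emph{Upper bound.} If $B\ge e_t$, one round querying all variables suffices, giving $O(1)$ rounds. Otherwise set $b=\max\{1,\lfloor\log_2(B/2+1)\rfloor\}$.
\begin{itemize}
\item Each block of $b$ guard levels needs $2(2^b-1)\le B$ queries when $B\ge2$, so it fits in one round.
\item When $B=1$, a block of one level uses two queries, i.e.\ two rounds.
\end{itemize}
Following the block strategy, each round (or constant number of rounds) either resolves the output or descends $b$ levels into a single subinstance. After $\lceil t/b\rceil$ blocks only the leaf variable $z$ remains, costing one more round. This gives $O(1+t/b)=O(\max\{1,t/\log(B+1)\})$, since $b=\Theta(\log(B+1))$.

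\emph{Lower bound.} Use the promise restriction to pointer evaluation and the adversary from Theorem~\ref{thm:round-hierarchy}. Any correct $R$-round algorithm satisfies
\[
t\le\sum_{j=1}^R\lceil\log_2(q_j+1)\rceil .
\]
With $q_j\le B$ this gives $t\le R\lceil\log_2(B+1)\rceil\le 2R\log_2(B+1)$, hence $R\ge t/(2\log_2(B+1))$. Also $R\ge1$ whenever $e_t\ge1$, since some query is needed because the leaf variable is essential. Together these give $\Omega(\max\{1,t/\log(B+1)\})$.

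\emph{Main obstacle.} The main care point is that the adversary inequality was derived inside the proof of the theorem, not stated separately. I will re-cite that derivation, noting that it holds for arbitrary batch sizes $q_j$, so it applies even when $R>t$ (the theorem itself assumed $R\le t$). Two other points are routine: the floor and ceiling bookkeeping at small $B$, and matching the normalization $\log(e_t+1)$ against $t+1$.
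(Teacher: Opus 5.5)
Your proposal is correct and follows the paper's proof. The lower bound is the same pointer-adversary inequality $t\le\sum_j\lceil\log_2(q_j+1)\rceil$ specialized to $q_j\le B$, and the upper bound is the same block strategy with $b=\Theta(\log(B+1))$ guard levels per round. The only differences are bookkeeping: you finish with a separate round for the leaf instead of querying a whole height-$b$ subinstance under the cap $3\cdot2^b\le B$, and you handle $B=1$ by splitting each block into two rounds rather than invoking the sequential strategy. Your remark that the adversary inequality does not require $R\le t$ is a correct and useful clarification.
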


\begin{proof}
The pointer adversary with $q_j\le B$ gives
$t\le R\lceil\log_2(B+1)\rceil$, proving the lower bound together with the
trivial requirement of one round.

For the upper bound, if $B$ is bounded by a constant, the sequential strategy
from Lemma~\ref{lem:guarded-baseline} uses $O(t)$ rounds and matches the
formula.  Otherwise choose
$b=\Theta(\log(B+1))$ so that $3\cdot2^b\le B$.  Query a complete $b$-level
guard block in each round and query the final height-$b$ subinstance in the
last round.  This uses $O(\max\{1,t/b\})$ rounds, each respecting the cap.
\end{proof}

\section{A Fixed-Vocabulary Sparse-DNF Compiler}
\label{sec:fixed-vocabulary}

The two unary hard families use distinct predicate names because every
semantic span has value $a$.  We now encode Boolean variables by distinct
substring values instead.  A packed word writes each DNF literal only where
it is used, avoiding the quadratic syntax blowup of copying a complete list of
all variables for every term.

Let
\[
 f(z_1,\ldots,z_e)
 =\bigvee_{j=1}^M\ \bigwedge_{i\in S_j}z_i
\]
be a nonconstant monotone DNF after duplicate terms have been removed, and let
\[
 L=\sum_{j=1}^M |S_j|
\]
be its total number of literal occurrences.  We take the $z_i$ to be the
essential variables of $f$; constant functions have trivial encodings.

\begin{theorem}[Fixed-vocabulary sparse-DNF compiler]
\label{thm:sparse-dnf-compiler}
Over the fixed alphabet $\{0,1,\#\}$ and one predicate name $q$, there is a
SemRE instance $(r_f,w_f)$ whose membership function is exactly $f$ and such
that
\[
 |r_f|+|w_f|=O((M+L)\log(e+1)).
\]
The expression has semantic depth one, contains only two unfiltered Kleene
stars, and every filtered span has length
\[
 \ell=\lceil\log_2(e+1)\rceil.
\]
\end{theorem}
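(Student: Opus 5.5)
The plan is a \emph{packed block word}: each DNF term becomes one $\#$-delimited block of fixed-length binary codes, and a single short expression selects one block and filters every code inside it.

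\textbf{Construction.} Set $\ell=\lceil\log_2(e+1)\rceil$. Assign to each variable $z_i$, $1\le i\le e$, the code $c(i)\in\{0,1\}^\ell$ given by the $\ell$-bit binary representation of $i$. Since $e\le 2^\ell-1$, these codes are pairwise distinct. For each term $S_j=\{i_1<\cdots<i_s\}$, let the block be $B_j=c(i_1)\cdots c(i_s)$. Then define
\[
 w_f=\#B_1\#B_2\#\cdots\#B_M\#.
\]
Let $A=0\cup1\cup\#$ and let $C=(0\cup1)^\ell$, written out as $\ell$ concatenated copies. Take
\[
 r_f=A^*\,\#\,\bigl(C\cap\langle q\rangle\bigr)^*\,\#\,A^* .
\]
Each filter is applied to a span of length exactly $\ell$, and no filter lies under another filter, so the semantic depth is one. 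The only stars whose bodies contain no filter are the two copies of $A^*$.

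\textbf{Size.} We have $|r_f|=O(\ell)$ and $|w_f|=M+1+L\ell$. Together these give $|r_f|+|w_f|=O((M+L)\log(e+1))$.

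\textbf{Correctness.} I would argue directly from the semantics; the span circuit of Theorem~\ref{thm:span-circuit} could be used instead. Consider any accepting parse. The middle factor $\#(C\cap\langle q\rangle)^*\#$ contains no $\#$ strictly inside. Its two $\#$ symbols must therefore be consecutive separators of $w_f$, so the inner star matches exactly one block $B_j$.

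The length of $B_j$ is $|S_j|\ell$, and the factorization starts at the beginning of the block. Since every factor has length exactly $\ell$, the split into factors is forced: the factors are precisely the codes $c(i)$ for $i\in S_j$. The filters then demand $\Ocal(q,c(i))=1$ for every $i\in S_j$. Conversely, any term all of whose variables are true yields such a parse, because the outer $A^*$ factors absorb the prefix and suffix.

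Blocks are never empty. Indeed, $f$ is nonconstant, so no term is empty, and the matching of $\#\#$ by the empty star never arises. Identifying $z_i$ with the key $(q,c(i))$, which is injective because the codes are distinct, the membership function is therefore exactly $\bigvee_j\bigwedge_{i\in S_j}z_i=f$.

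\textbf{Main obstacle.} The step I expect to need the most care is this alignment argument. The span circuit quantifies over all split points, so I must check that no accepting proof straddles a $\#$, and that no proof uses an off-by-phase chunking whose keys would be substrings other than the codes. Both exclusions follow from two facts: the filtered star body is $\#$-free, and it has the fixed length $\ell$, anchored at a $\#$. I would state this as a short claim about $M_{(C\cap\langle q\rangle)^*}(i,j)$ for intervals $(i,j)$ lying inside a block. The remaining points are routine: the distinct-codes requirement $e\le 2^\ell-1$, and the size count.
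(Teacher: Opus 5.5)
Your construction is correct, and it takes a genuinely different route from the paper's.

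\textbf{Comparison.} The paper encodes $f$ in \emph{both} components of the instance. The word consists of framed blocks $\#\#c_{i_1}\#c_{i_2}\#\cdots\#c_{i_s}\#\#$ separated by extra $\#$'s. The expression is $A^*(B_1\cup\cdots\cup B_M)A^*$, where $B_j$ spells the same framed sequence with a filter on each literal code $C_i$. Alignment there comes from the $\#\#$ frames and the single-$\#$ internal separators, which force $B_j$ onto a whole block with the same code sequence.

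You make the expression generic and put all of $f$ into the word. Alignment then reduces to the two facts you identify: the filtered star body is $\#$-free, and it has fixed length $\ell$. This buys a much smaller expression, $|r_f|=O(\log(e+1))$ with a single filter occurrence, and a shorter alignment argument.

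The price is that your filters accept \emph{any} length-$\ell$ binary factor. The span circuit therefore also contains off-phase keys $(q,x)$ that are not codes. As you note, these are inessential, because an off-phase or partial star match is always flanked by a non-$\#$ symbol, so the adjacent $\#$ gate is zero. The membership function is still exactly $f$. In the paper, every filter is applied to a literal code, so the only keys are the $e$ code keys.

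\textbf{The star count.} This is the one clause your construction does not meet as intended. The paper's proof shows that the \emph{only} Kleene stars in $r_f$ are the two outer copies of $A^*$. Your expression has a third star, $(C\cap\langle q\rangle)^*$, with a filter beneath it. Your reading, ``the only stars whose bodies contain no filter,'' weakens the clause rather than establishing it.

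The repair is cheap. Let $s_{\max}=\max_j|S_j|$, put $G_1=C\cap\langle q\rangle$ and $G_{s+1}=(C\cap\langle q\rangle)(\epsilon\cup G_s)$, and take $r_f=A^*\#G_{s_{\max}}\#A^*$. This expression has the following properties:
\begin{enumerate}
\item it has semantic depth one;
\item its size is $O(s_{\max}\ell)$, hence $O(L\ell)$;
\item it never matches the empty string between the two $\#$'s, so empty blocks need no separate argument;
\item your alignment proof goes through verbatim, since each block has $|S_j|\le s_{\max}$ codes.
\end{enumerate}
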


\begin{proof}
Choose distinct codes $c_1,\ldots,c_e\in\{0,1\}^{\ell}$, and let $C_i$ be the
literal regular expression spelling $c_i$.  List the indices in each term
$S_j=\{i_{j,1}<\cdots<i_{j,s_j}\}$.  Its framed word block and corresponding
expression block are
\begin{align*}
 W_j&=\#\#\,c_{i_{j,1}}\#c_{i_{j,2}}\#\cdots\#c_{i_{j,s_j}}\,\#\#,\\
 B_j&=\#\#\,
 (C_{i_{j,1}}\cap\langle q\rangle)\#
 (C_{i_{j,2}}\cap\langle q\rangle)\#\cdots\#
 (C_{i_{j,s_j}}\cap\langle q\rangle)\,\#\#.
\end{align*}
In both displays, a single $\#$ occurs only between consecutive codes; in
particular, a singleton term has the form
$\#\#c_i\#\#$ and $\#\#(C_i\cap\langle q\rangle)\#\#$.
Separate consecutive $W_j$ blocks by three additional copies of $\#$, and
let $w_f$ be their concatenation.  Finally, with
$A=0\cup1\cup\#$, define
\[
 r_f=A^*(B_1\cup\cdots\cup B_M)A^*.
\]

The framing prevents unintended cross-block matches.  Any occurrence of a
$B_j$ must start with two copies of $\#$ immediately followed by a binary
code.  In every separator run, this aligns the occurrence with the beginning
of one framed block.  The internal single-$\#$ separators and the final
double-$\#$ frame then force the complete ordered code sequence of that block.
Since the codes are distinct and term indices are sorted, $B_j$ matches a
framed block exactly when it matches the copy of term $S_j$ in the word.

The semantic filters in that match query precisely the keys
$\Ocal(q,c_i)$ for $i\in S_j$.  Extensionality identifies repeated copies of a
code across different blocks.  Therefore $B_j$ succeeds exactly when all
literals of term $j$ are one, and the union induces $f$.

Each literal occurrence contributes one length-$\ell$ code to the word and
one to the expression.  Frames, separators, and union nodes contribute
$O(M+L)$ additional syntax.  Hence
$|r_f|+|w_f|=O((M+L)\ell)$.  Only the two outer copies of $A^*$ use Kleene
star, and all filters are unnested and applied to one length-$\ell$ code.
\end{proof}

\subsection{The Sharp Family over One Predicate}

For the monotone-address function in
Equation~\eqref{eq:monotone-address}, the threshold contributes
$\binom{k}{p+1}$ terms and the data layer contributes $m$ terms.  Thus
\[
 M=\binom{k}{p+1}+m=O(E).
\]
Every term has exactly $p+1=O(\log E)$ literals, so
\[
 L=(p+1)M=O(E\log E).
\]

\begin{corollary}[One-predicate sharp gap]
\label{thm:fixed-vocabulary-gap}
For every $E\ge2$, there is a semantic-depth-one SemRE instance over
$\{0,1,\#\}$ using one predicate name such that
\[
 |r_E|+|w_E|=O(E\log^2E),
\]
every semantic span has length $\lceil\log_2(E+1)\rceil$, and
\[
 e(r_E,w_E)=E,
 \qquad
 \Qround{1}(r_E,w_E)=E.
\]
Moreover,
\[
 \Qround{2}(r_E,w_E)=\Qdet(r_E,w_E)
 =\log_2E+\tfrac12\log_2\log_2E+O(1).
\]
\end{corollary}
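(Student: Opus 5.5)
The plan is to apply Theorem~\ref{thm:sparse-dnf-compiler} to the monotone DNF of $\operatorname{MA}_{k,\mathcal A}$ from Equation~\eqref{eq:monotone-address}, with $k=k_E$, $p=\lfloor k/2\rfloor$ and $m=E-k$ as in the calibration. I would write the function in variables $z_1,\ldots,z_E$: the $k$ address bits together with the $m$ data bits. Its DNF has one term $\bigwedge_{i\in T}x_i$ for each $T\in\binom{[k]}{p+1}$ and one term $y_S\land\bigwedge_{i\in S}x_i$ for each $S\in\mathcal A$. These terms are pairwise distinct, so no duplicates need removing. The function is nonconstant. By Lemma~\ref{lem:address-complexity} all $E$ variables are essential, so the compiler's parameter $e$ equals $E$. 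The compiler then yields an instance $(r_E,w_E)$ over $\{0,1,\#\}$ with one predicate name and semantic depth one. Every filtered span has length $\lceil\log_2(E+1)\rceil$, and the membership function is exactly $F_{r_E,w_E}=\operatorname{MA}_{k,\mathcal A}$.

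Next come the size parameters. Here $M=\binom{k}{p+1}+m$. Since $\binom{k}{p+1}\le\binom{k}{p}<2E$ by the minimality estimate used for the trie factorization, and $m\le E$, we get $M=O(E)$. Every term has exactly $p+1$ literals, and from the proof of Theorem~\ref{thm:sharp-adaptivity} we know $k=O(\log E)$. Hence $L=(p+1)M=O(E\log E)$. The compiler bound then gives $|r_E|+|w_E|=O((M+L)\log(E+1))=O(E\log^2E)$.

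For the query claims, Theorem~\ref{thm:decision-tree} reduces every complexity to the induced Boolean function. Its last clause gives $\Qround{1}=e(r_E,w_E)=E$. One point needs care: the rounds version, $\Qround{2}=D^{[2]}(F)$. The proof of Theorem~\ref{thm:decision-tree} applies verbatim round by round. Out-of-support queries are answered locally, and a batched matcher restricted to $\mathcal K_{r,w}$ is exactly a two-round decision tree. Lemma~\ref{lem:address-complexity} then gives $\Qround{2}=\Qdet=k+1$. The estimate $k+1=\log_2E+\tfrac12\log_2\log_2E+O(1)$ is exactly the one already computed in the proof of Theorem~\ref{thm:sharp-adaptivity}.

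The only real obstacle is certifying that the compiled instance's membership function is literally $\operatorname{MA}_{k,\mathcal A}$ on the same key set. In particular, distinct codes must give distinct keys with no accidental sharing. Theorem~\ref{thm:sparse-dnf-compiler} already guarantees this: the codes are distinct, and its framing argument rules out cross-block matches. So the remaining work is bookkeeping on $M$ and $L$.
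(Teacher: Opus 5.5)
Your proposal is correct and follows the paper's own argument: apply the sparse-DNF compiler to the monotone-address DNF, bound $M=O(E)$ and $L=(p+1)M=O(E\log E)$, and transfer all query claims through the induced Boolean function via Lemma~\ref{lem:address-complexity}. You also make explicit that the decision-tree simulation works round by round, so that $\Qround{2}=D^{[2]}(F_{r_E,w_E})$, and you check that $e=E$ when the compiler is applied; the paper leaves both of these points implicit.
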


\begin{proof}
Apply Theorem~\ref{thm:sparse-dnf-compiler} to the DNF in
Equation~\eqref{eq:monotone-address}.  The compiler preserves the induced
Boolean function, so Theorem~\ref{thm:sharp-adaptivity} supplies the query
claims.
\end{proof}

The same compiler transfers the complete round hierarchy.  The guarded DNF
recurrences satisfy
\[
 M_{t+1}=2M_t+1,
 \qquad
 L_{t+1}=2L_t+2M_t+2,
\]
with $M_0=L_0=1$.  Hence
$M_t=O(e_t)$ and $L_t=O(e_t\log e_t)$.

\begin{corollary}[One-predicate round hierarchy]
\label{cor:fixed-vocabulary-hierarchy}
For every $t$, the guarded function $f_t$ has a semantic-depth-one realization
over $\{0,1,\#\}$ with one predicate name, logarithmic-length semantic spans,
and total representation size $O(e_t\log^2e_t)$.  For every
$1\le R\le t$, this realization satisfies
\[
 \Qround{R}=\Theta(R e_t^{1/R}).
\]
\end{corollary}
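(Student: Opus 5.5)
The plan is to apply Theorem~\ref{thm:sparse-dnf-compiler} to the guarded DNF $f_t$ and then transfer the query bounds through the Boolean function. Since the compiler produces an instance whose membership function is exactly $f_t$, Theorem~\ref{thm:decision-tree} and the round model show that every $R$-round cost depends only on the induced function. Hence $\Qround{R}$ of the compiled instance equals $D^{[R]}(f_t)=\Qround{R}(r_t,w_t)$. Theorem~\ref{thm:round-hierarchy} then gives $\Theta(Re_t^{1/R})$ for every $1\le R\le t$.

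\textbf{Preparing the input.} Theorem~\ref{thm:sparse-dnf-compiler} needs $f_t$ as a nonconstant monotone DNF over its essential variables, with duplicate terms removed. Lemma~\ref{lem:guarded-baseline} shows that all $e_t$ variables are essential. I would unfold Equation~\eqref{eq:guarded-function} into the explicit DNF: the terms of $f_{t+1}$ are $y\wedge T$ for terms $T$ of the left copy, the term $xy$, and $x\wedge T$ for terms $T$ of the right copy. Because the copies use disjoint variables and $x,y$ are fresh, no duplicate terms arise. This gives the stated recurrences $M_{t+1}=2M_t+1$ and $L_{t+1}=2L_t+2M_t+2$, with $M_0=L_0=1$.

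\textbf{Size bound.} Solving the recurrences gives $M_t=2^{t+1}-1=O(e_t)$. For $L_t$, write $L_t=2^t\lambda_t$. The recurrence then becomes $\lambda_{t+1}=\lambda_t+O(1)$, so $\lambda_t=O(t)$ and $L_t=O(t2^t)=O(e_t\log e_t)$. Equivalently, every term has length at most $2t+1$. The compiler bound $O((M_t+L_t)\log(e_t+1))$ is therefore $O(e_t\log^2 e_t)$. The compiler's further guarantees give semantic depth one, a single predicate name, and spans of length $\lceil\log_2(e_t+1)\rceil$.

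\textbf{Main obstacle.} The only subtle point is justifying that round complexity is invariant under the re-encoding. The compiled instance's key set, after the support reduction in Theorem~\ref{thm:decision-tree}, is $\{(q,c_i)\}$, which is in bijection with the variables of $f_t$. The argument in Theorem~\ref{thm:decision-tree} (discarding out-of-support queries) also preserves the batch structure, because answering locally never adds a round. So $R$-round matchers for the compiled instance correspond exactly to $R$-round decision trees for $f_t$, and the conclusion follows.
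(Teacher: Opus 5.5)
Your proposal is correct and follows the paper's route: you apply the sparse-DNF compiler to the unfolded guarded DNF, use $M_{t+1}=2M_t+1$ and $L_{t+1}=2L_t+2M_t+2$ to get size $O(e_t\log^2e_t)$, and transfer $\Qround{R}=\Theta(Re_t^{1/R})$ from Theorem~\ref{thm:round-hierarchy} because the compiled instance induces the same Boolean function; the paper leaves this last step implicit, and you justify it explicitly. One small slip: your per-term bound is not ``equivalent'' to the recurrence solution and is also loose (terms of $f_t$ have at most $t+1$ literals, and in fact $L_t=(2t+1)2^t$), but either bound suffices.
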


The logarithmic span length is unavoidable up to constants.  Over an alphabet
of size $\sigma$ with one predicate name, spans of length at most $s$ expose
at most
\[
 1+\sigma+\cdots+\sigma^s
\]
different oracle keys.  An instance with $E$ essential keys therefore needs
$s=\Omega(\log_\sigma E)$ for every fixed $\sigma\ge2$.  The compiler attains
this span length while keeping the representation within polylogarithmic
factors of linear size.

\section{Exact Randomization and Candidate Minimax}
\label{sec:lower-bounds}

We use pointwise error $\delta<1/2$ and worst-case expected query cost.  Under
this convention, nonadaptive randomization has an exact value for every
instance, not merely a lower bound.

\begin{theorem}[Exact randomized nonadaptive complexity]
\label{thm:random-nonadaptive}
For every SemRE instance $(r,w)$ and every $0\le\delta<1/2$,
\[
 \Qrandnonad(r,w)=(1-2\delta)e(r,w).
\]
\end{theorem}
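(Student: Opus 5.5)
We prove matching upper and lower bounds for $F=F_{r,w}$, with $E=e(r,w)$. By Theorem~\ref{thm:decision-tree}, queries outside $\mathcal K_{r,w}$ can be simulated locally, and this argument carries over verbatim to nonadaptive randomized matchers. So we work with a probability distribution over query sets $S\subseteq\mathcal K_{r,w}$, each paired with a (possibly randomized) output rule that depends only on the answers in $S$. If $E=0$, then $F$ is constant and both sides are zero, so assume $E\ge1$.

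\emph{Upper bound.} The algorithm mixes three strategies. With probability $1-2\delta$, query all $E$ essential keys and output $F$ exactly. With probability $\delta$, query nothing and output $0$. With probability $\delta$, query nothing and output $1$. On every oracle, exactly one of the two constant branches is wrong, so the error is exactly $\delta$. The expected cost is $(1-2\delta)E$ on every oracle, which proves $\Qrandnonad(r,w)\le(1-2\delta)E$.

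\emph{Lower bound.} Fix an essential key $z$. By definition there are two oracles $\Ocal_z^0$ and $\Ocal_z^1$ that differ only at $z$ and satisfy $F(\Ocal_z^0)\ne F(\Ocal_z^1)$. Because the query set is drawn before any answer is seen, $p_z=\Pr[z\in S]$ does not depend on the oracle. Couple the two runs by using the same coins. On the event $z\notin S$, the transcripts are identical, so the output distributions conditioned on that event coincide. Let $a$ be the conditional probability of outputting $1$ on that event. The two correctness requirements give
\[
 (1-p_z)(1-a)\le\delta
 \quad\text{and}\quad
 (1-p_z)a\le\delta .
\]
Adding them yields $1-p_z\le2\delta$, so $p_z\ge1-2\delta$. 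The expected cost is $\sum_z p_z$, taken over the essential keys at least, and this is independent of the oracle. Hence the worst-case expected cost is at least $(1-2\delta)E$.

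The main obstacle is the coupling step. It needs to hold with a possibly randomized output rule and with the expectation-based cost convention. The point is that nonadaptivity makes both $p_z$ and the conditional output distribution on $\{z\notin S\}$ oracle-independent, and then the two error constraints add. Adaptive algorithms break exactly this, since the probability of querying $z$ could then depend on the oracle.
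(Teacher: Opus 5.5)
Your proposal is correct and follows the paper's proof essentially step for step. The upper bound is the same three-way mixture: exact evaluation with probability $1-2\delta$, and each constant output with probability $\delta$. The lower bound is the same shared-coin coupling of two inputs that differ only in one essential key, with your parameter $a$ simply making explicit, for randomized output rules, the paper's step that ``at least one of the two coupled executions errs.''
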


\begin{proof}
Let $z_i$ be an essential variable.  Choose assignments $\alpha_i$ and
$\beta_i$ that differ only in $z_i$ and have different correct outputs.  Run a
randomized nonadaptive matcher on both assignments with the same private
coins.  Its random query set is independent of the answers.  Whenever that set
omits $z_i$, the two transcripts and outputs are identical, so at least one of
the two executions errs.  Therefore
\[
 \Pr[z_i\text{ is not queried}]
 \le
 \Pr[\operatorname{err}\mid\alpha_i]
 +\Pr[\operatorname{err}\mid\beta_i]
 \le2\delta.
\]
Summing the query indicators over all essential variables gives expected cost
at least $(1-2\delta)e(r,w)$.

For the matching upper bound, output zero without querying with probability
$\delta$, output one without querying with probability $\delta$, and with
probability $1-2\delta$ query all essential variables and evaluate exactly.
On every input, exactly one constant-output branch can be wrong, so the
pointwise error is $\delta$ and the expected cost is
$(1-2\delta)e(r,w)$.
\end{proof}

The same coupling gives a useful adaptive lower bound.  For an input $\alpha$
to a Boolean function $f$, let $s(f,\alpha)$ be the number of coordinates
whose individual flip changes $f(\alpha)$.

\begin{lemma}[Sensitivity under expected cost]
\label{lem:random-sensitivity}
Every pointwise-error-$\delta$ randomized decision tree satisfies
\[
 R_\delta^{\mathrm{exp}}(f)
 \ge (1-2\delta)s(f,\alpha)
\]
for every input $\alpha$.  Conversely,
\[
 R_\delta^{\mathrm{exp}}(f)\le(1-2\delta)D(f).
\]
\end{lemma}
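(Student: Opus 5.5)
The lower bound adapts the coupling argument from Theorem~\ref{thm:random-nonadaptive} to adaptive trees. Fix an input $\alpha$ and let $S$ be the set of its sensitive coordinates, so $|S|=s(f,\alpha)$. For each $i\in S$, let $\alpha^{(i)}$ be $\alpha$ with coordinate $i$ flipped; then $f(\alpha^{(i)})\ne f(\alpha)$. Fix a randomized decision tree with pointwise error $\delta$, and run it on $\alpha$ and on $\alpha^{(i)}$ with the same private coins.

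The key observation is that an adaptive tree follows the same path on both inputs until it first queries coordinate $i$, because the inputs agree everywhere else. So on every coin outcome where the execution on $\alpha$ never queries $i$, the execution on $\alpha^{(i)}$ also never queries $i$, and the two outputs coincide. Since the correct outputs differ, one of the two executions errs on each such outcome. This gives
\[
 \Pr[\text{$i$ not queried on }\alpha]\le\Pr[\operatorname{err}\mid\alpha]+\Pr[\operatorname{err}\mid\alpha^{(i)}]\le2\delta .
\]
Summing the probabilities $\Pr[i\text{ queried on }\alpha]\ge 1-2\delta$ over $i\in S$ and using linearity of expectation bounds the expected number of distinct queries on $\alpha$ below by $(1-2\delta)s(f,\alpha)$. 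The worst-case expected cost is at least the cost on $\alpha$, so the claimed bound holds.

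This step is the only delicate one. The event "$i$ not queried" must be defined on the shared coin space, and it must be the same event under both inputs. The path-agreement argument gives exactly this: the event is determined by the run on $\alpha$ and coincides with the corresponding event for $\alpha^{(i)}$. Charging each key once, by memoization, keeps the count in terms of distinct keys.

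The upper bound reuses the mixing strategy from Theorem~\ref{thm:random-nonadaptive}, with an optimal deterministic tree in place of querying everything. With probability $\delta$ output $0$, with probability $\delta$ output $1$, each without querying. With the remaining probability $1-2\delta$, run a depth-$D(f)$ deterministic tree. On every input exactly one constant branch is wrong, so the pointwise error is exactly $\delta$. The expected number of queries on any input is at most $(1-2\delta)D(f)$. If $\delta=0$, this reduces to the deterministic tree itself.
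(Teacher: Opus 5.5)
Your proposal is correct and follows the paper's proof: the same shared-coin coupling of the runs on $\alpha$ and on $\alpha$ with coordinate $i$ flipped shows that each sensitive coordinate is queried with probability at least $1-2\delta$, and linearity of expectation sums these bounds. For the upper bound you use the same three-way mixture as the paper, with an optimal depth-$D(f)$ deterministic tree in place of exhaustive evaluation.
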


\begin{proof}
For each sensitive coordinate $i$, couple the execution on $\alpha$ with the
execution on the input obtained by flipping $i$.  Until $i$ is queried, the
two adaptive transcripts coincide.  If the execution on $\alpha$ never
queries $i$, at least one coupled execution errs, so $i$ is queried on
$\alpha$ with probability at least $1-2\delta$.  Summing over sensitive
coordinates proves the lower bound.  The upper bound uses the same three-way
mixture as Theorem~\ref{thm:random-nonadaptive}, replacing exhaustive
evaluation by an optimal exact decision tree.
\end{proof}

At a middle-layer address $x=1_S$, set all nonaddressed data bits to zero.
Setting $y_S=0$ gives sensitivity $k-p+1$, while setting $y_S=1$ gives
sensitivity $p+1$.  Lemma
\ref{lem:random-sensitivity} and the exact depth $k+1$ therefore imply the
following robustness statement.

\begin{corollary}[Randomized adaptivity gap]
\label{cor:random-address-gap}
For the sharp family and every fixed $\delta<1/2$,
\[
 \Qrand(r_E,w_E)=\Theta(\log E),
 \qquad
 \Qrandnonad(r_E,w_E)=(1-2\delta)E.
\]
Thus the $\Theta(E/\log E)$ adaptive advantage survives bounded error.
\end{corollary}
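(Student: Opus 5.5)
The nonadaptive identity is simply Theorem~\ref{thm:random-nonadaptive} applied with $e(r_E,w_E)=E$, which Theorem~\ref{thm:sharp-adaptivity} supplies. The real content is the adaptive claim $\Qrand(r_E,w_E)=\Theta(\log E)$. By Theorem~\ref{thm:decision-tree}, $\Qrand(r_E,w_E)=\Rexp(\operatorname{MA}_{k,\mathcal A})$, so I will bound this Boolean quantity from both sides.

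\emph{Upper bound.} Lemma~\ref{lem:random-sensitivity} gives $\Rexp(f)\le(1-2\delta)D(f)$. Lemma~\ref{lem:address-complexity} gives $D(\operatorname{MA}_{k,\mathcal A})=k+1$. The proof of Theorem~\ref{thm:sharp-adaptivity} shows $k=\log_2E+\tfrac12\log_2\log_2E+O(1)=O(\log E)$. Together these give $\Qrand=O(\log E)$.

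\emph{Lower bound.} I will use the sensitivity half of Lemma~\ref{lem:random-sensitivity} at a single input. Pick any $S\in\mathcal A$, which is possible since $m=E-k\ge1$ for $E\ge2$. Take the input $\alpha$ with $x=1_S$, all data bits other than $y_S$ set to zero, and $y_S=1$, so $f(\alpha)=1$.

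I claim every $x_i$ with $i\in S$ is sensitive, and so is $y_S$. Flipping $y_S$ gives output $0$. Flipping some $x_i$ with $i\in S$ drops $|x|$ to $p-1<p$, which also gives output $0$. Hence $s(f,\alpha)\ge p+1=\lfloor k/2\rfloor+1$.

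Since $k=\Theta(\log E)$, and $1-2\delta>0$ is a constant for fixed $\delta<1/2$, this gives $\Qrand\ge(1-2\delta)(\lfloor k/2\rfloor+1)=\Omega(\log E)$. The alternative input with $y_S=0$ has sensitivity $k-p+1$ (the $k-p$ address bits outside $S$ plus $y_S$). Taking the larger of the two inputs improves only the constant.

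Finally, dividing the two costs gives the ratio $(1-2\delta)E/\Theta(\log E)=\Theta(E/\log E)$ for fixed $\delta$, which is the surviving adaptive advantage.

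The only step that needs care is the sensitivity count. One must check that flipping an address bit inside $S$ does not trigger some other data term or the threshold term. This holds because every data term requires $|x|\ge p$, and after the flip $|x|=p-1$.
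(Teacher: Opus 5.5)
Your proposal is correct and follows the paper's own argument. The nonadaptive value comes from Theorem~\ref{thm:random-nonadaptive} with $e(r_E,w_E)=E$, the upper bound from $\Rexp(f)\le(1-2\delta)D(f)$ with $D=k+1=O(\log E)$, and the lower bound from the sensitivity coupling at the middle-layer address $x=1_S$ with the other data bits set to zero, with sensitivities $p+1$ and $k-p+1$ exactly as in the paper. Your explicit check that flipping an address bit in $S$ cannot activate the threshold or another data term is a detail the paper leaves implicit.
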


The separation is not confined to exact worst-case essentiality.  In the full
central-layer family $\mathcal A=\binom{[k]}p$, choose $S$ uniformly, set
$x=1_S$, set all nonaddressed data bits to zero, and choose $y_S$ uniformly.
Then the output is $y_S$.  If $Q_y$ is the random set of queried data
variables, it hits $y_S$ with probability
$\mathbb E[|Q_y|]/|\mathcal A|$.  Conditional on a miss, every output has error
at least $1/2$.  Hence average error at most $\delta$ requires at least
$(1-2\delta)|\mathcal A|=\Omega(E)$ expected data queries, while the exact
two-round algorithm still uses $k+1=O(\log E)$ queries.

\subsection{Exact Fixed-Word Minimax Values}

For a word $w$ of length $n$, let
\[
 \Sub(w)=\{w[i:j]:0\le i\le j\le n\},
 \qquad
 \SD(w)=|\Sub(w)|.
\]
We call $\SD(w)$ the \emph{substring diversity} of $w$.  Extensional caching
makes substring values, rather than their positions, the relevant unit.  For a
span bound $s$, define
\[
 \SD_s(w)=
 \bigl|\{w[i:j]:0\le i\le j\le n,\ j-i\le s\}\bigr|.
\]

Let $A=\bigcup_{a\in\Sigma}a$.  The factor-OR expression
\[
 r_h=\bigcup_{j=1}^h
 A^*\bigl(A^*\cap\langle q_j\rangle\bigr)A^*
\]
induces the OR of all $h\SD(w)$ candidate keys, including the empty substring.

\begin{theorem}[Exact candidate-universe minimax]
\label{thm:candidate-universe}
Fix a word $w$, an integer $h\ge1$, and $0\le\delta<1/2$.  Then
\[
 \max_{r:\,|\operatorname{Pred}(r)|\le h}\Qdet(r,w)
 =h\SD(w)
\]
and
\[
 \max_{r:\,|\operatorname{Pred}(r)|\le h}\Qrand(r,w)
 =(1-2\delta)h\SD(w).
\]
Both maxima are attained by the semantic-depth-one expression $r_h$, whose
syntax size is $O(h|\Sigma|)$.
\end{theorem}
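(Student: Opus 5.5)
The plan is to prove each equality by a universal upper bound over all expressions and a matching lower bound attained by $r_h$.

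\emph{Upper bounds.} Fix any $r$ with at most $h$ predicate names. Every input variable of $F_{r,w}$ is a key $(q,x)$ with $q\in\operatorname{Pred}(r)$ and $x$ a substring of $w$, since filter gates only read $X_{q,w[i:j]}$. Hence $|\mathcal K_{r,w}|\le h\SD(w)$, and in particular $e(r,w)\le h\SD(w)$. By Theorem~\ref{thm:decision-tree}, $\Qdet(r,w)=D(F_{r,w})$ is at most the number of essential variables, because querying all of them in one round is exact. For the randomized value, Lemma~\ref{lem:random-sensitivity} gives $\Rexp(F)\le(1-2\delta)D(F)$. Alternatively, the three-way mixture of Theorem~\ref{thm:random-nonadaptive} gives $(1-2\delta)e(r,w)\le(1-2\delta)h\SD(w)$ directly. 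Either way, both maxima are bounded above.

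\emph{Lower bound via $r_h$.} First I will check that $F_{r_h,w}$ is exactly the OR of the $h\SD(w)$ variables $X_{q_j,x}$ with $x\in\Sub(w)$. For every $j$ and every factor occurrence $w[i:i']$, the concatenation $A^*(A^*\cap\langle q_j\rangle)A^*$ matches $w$ with the middle span $w[i:i']$, including empty spans at any position. So the language condition is $\bigvee_j\bigvee_{x\in\Sub(w)}\Ocal(q_j,x)$. All of these keys are distinct, because predicate names differ or substring values differ.

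For the OR function $g$ on $N=h\SD(w)$ variables, evaluate at the all-zero input $\alpha=0$. Every coordinate is sensitive, so $s(g,0)=N$. Lemma~\ref{lem:random-sensitivity} then gives $\Rexp(g)\ge(1-2\delta)N$. With $\delta=0$ it gives $D(g)\ge N$ (a deterministic tree is a $\delta=0$ randomized tree), matching the deterministic upper bound. Combined with Theorem~\ref{thm:decision-tree}, this yields $\Qdet(r_h,w)=h\SD(w)$ and $\Qrand(r_h,w)=(1-2\delta)h\SD(w)$.

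\emph{Syntax size.} $A$ has size $O(|\Sigma|)$, and each of the $h$ disjuncts uses three copies of it plus constant overhead, giving $O(h|\Sigma|)$. Semantic depth one is evident.

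The only delicate point is the claim that $r_h$ exposes every substring value, including $\eps$, as an independent variable and that no other keys appear. This needs the star recurrence with $M_{u^*}(i,i)=1$, so that empty prefixes and suffixes are allowed; it is a short direct check. The span-bounded variant would follow identically, replacing $A^*$ inside the filter by $\bigcup_{\ell\le s}A^\ell$.
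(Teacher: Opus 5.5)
Your proposal is correct and follows the paper's proof essentially step by step. The upper bounds come from the candidate-universe count $|\mathcal K_{r,w}|\le h\SD(w)$, with exhaustive querying (and the three-way mixture for the randomized case). The lower bounds come from the factor-OR expression $r_h$, where all $h\SD(w)$ coordinates are sensitive at the all-zero oracle. The only cosmetic difference is that you derive the deterministic lower bound from the sensitivity lemma at $\delta=0$ rather than from the paper's direct all-zero adversary, and that is equally valid.
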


\begin{proof}
An expression using at most $h$ predicate names can depend only on keys
$(q,x)$ with $x\in\Sub(w)$, so there are at most $N=h\SD(w)$ candidates.
Querying all candidates and evaluating the span circuit gives deterministic
cost at most $N$.  The three-way mixture that outputs zero, outputs one, or
performs this exact evaluation gives randomized expected cost at most
$(1-2\delta)N$.

For $r_h$, the filtered middle factor can be any substring of $w$, including
$\eps$.  Hence
\[
 F_{r_h,w}=\bigvee_{j=1}^h\ \bigvee_{x\in\Sub(w)}X_{q_j,x},
\]
the OR of exactly $N$ independent variables.  On the all-zero oracle, exact
evaluation must query every variable.  Moreover, all $N$ coordinates are
sensitive there, so Lemma~\ref{lem:random-sensitivity} gives randomized
expected cost at least $(1-2\delta)N$.  The displayed syntax repeats the
constant-size factor skeleton for each predicate, giving size
$O(h|\Sigma|)$.
\end{proof}

The same argument is exact under a semantic-span bound.

\begin{corollary}[Exact bounded-span minimax]
\label{thm:span-lb}
Fix $w$, $h\ge1$, $s\ge0$, and $\delta<1/2$.  Among SemREs using at most $h$
predicate names and whose feasible semantic spans have length at most $s$,
\[
 \max_r\Qdet(r,w)=h\SD_s(w)
\]
and
\[
 \max_r\Qrand(r,w)=(1-2\delta)h\SD_s(w).
\]
\end{corollary}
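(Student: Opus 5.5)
The plan is to mirror the proof of Theorem~\ref{thm:candidate-universe}, with $\Sub(w)$ replaced by the set of factors of length at most $s$. There are two halves: a universal upper bound over every admissible expression, and a matching lower bound from one explicit bounded-span factor-OR expression.

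\emph{Upper bound.} Let $r$ use at most $h$ predicate names, with all feasible semantic spans of length at most $s$. I interpret ``feasible'' as follows: in the span circuit $F_{r,w}$, a filter gate $M_{u\cap\langle q\rangle}(i,j)$ can influence the output only if $j-i\le s$. Every other filter gate can be replaced by the constant $0$ without changing $F_{r,w}$, since no accepting proof in $\Pi(r,w)$ uses it.

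Consequently every key occurring in the proof DNF has the form $(q,x)$ with $x=w[i:j]$ and $j-i\le s$. Extensionality collapses equal values, so there are at most $N_s=h\SD_s(w)$ relevant keys. Then:
\begin{itemize}
\item Querying all of them and evaluating the span circuit (Theorem~\ref{thm:span-circuit}) gives $\Qdet\le N_s$.
\item The three-way mixture from Theorem~\ref{thm:random-nonadaptive} (output $0$, output $1$, or evaluate exactly) gives $\Qrand\le(1-2\delta)N_s$.
\end{itemize}

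\emph{Lower bound.} With $A=\bigcup_{a\in\Sigma}a$ and $A^{\le s}=(\eps\cup A)^s$, I take
\[
 r_{h,s}=\bigcup_{j=1}^h A^*\bigl(A^{\le s}\cap\langle q_j\rangle\bigr)A^*.
\]
Every filter in $r_{h,s}$ has a span of length at most $s$, and the middle factor can match any factor of $w$ of length at most $s$, including $\eps$. Hence
\[
 F_{r_{h,s},w}=\bigvee_{j}\ \bigvee_{x}X_{q_j,x},
\]
where $x$ ranges over the $\SD_s(w)$ distinct short factors of $w$. This is the OR of exactly $N_s$ distinct variables.

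On the all-zero oracle every coordinate is sensitive. An exact tree must therefore query all of them, giving the deterministic bound $N_s$. Lemma~\ref{lem:random-sensitivity} gives the randomized bound $(1-2\delta)N_s$. Together with the upper bound, this yields both displayed equalities.

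\emph{Main obstacle.} The main point needing care is the meaning of ``feasible semantic span''. If the restriction is taken syntactically (the child language of every filter contains only words of length at most $s$), the upper bound is immediate. If it is taken semantically (only spans actually realizable in $w$), I will argue through the proof DNF: a key appears in $F_{r,w}$ only as a leaf of some symbolic accepting proof, and that leaf sits under a realized filter span. The remaining details are routine. The span-$s$ construction uses only $O(h(|\Sigma|+s))$ syntax, and the degenerate case $s=0$ leaves only the key $(q_j,\eps)$ for each $j$, with $\SD_0(w)=1$.
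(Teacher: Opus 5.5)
Your proposal is correct and follows the paper's proof. The upper bound counts the at most $h\SD_s(w)$ candidate keys, using exhaustive querying for the deterministic bound and the three-way mixture for the randomized one; the lower bound uses the bounded-span factor-OR expression with the all-zero adversary and Lemma~\ref{lem:random-sensitivity}. One minor slip in a side remark: your $(\eps\cup A)^s$ defines the same language as the paper's $U_s=\bigcup_{\ell=0}^{s}A^\ell$ and is more compact, but its syntax size is $O(h|\Sigma|(s+1))$ rather than the $O(h(|\Sigma|+s))$ you state, since each of the $s$ factors contains a full copy of $A$.
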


\begin{proof}
The candidate universe contains at most $h\SD_s(w)$ keys.  For the matching
lower bounds, let
\[
 U_s=\bigcup_{\ell=0}^s A^\ell,
 \qquad
 r_{h,s}=\bigcup_{j=1}^h
 A^*(U_s\cap\langle q_j\rangle)A^*.
\]
This semantic-depth-one expression induces the OR of exactly the
$h\SD_s(w)$ keys whose substring length is at most $s$.  Apply the deterministic
all-zero adversary and the sensitivity lower bound.
\end{proof}

For the word $0^m1^m$, every distinct substring is one of $\eps$, $0^i$,
$1^j$, or $0^i1^j$, and therefore $\SD(0^m1^m)=(m+1)^2$.  This recovers the
factor-OR mechanism of Huang et al.~\cite{huang2025membership} and upgrades it
to an exact value for every fixed word, with duplicate substring values
removed.  On substring-rich binary words the value is $\Theta(hn^2)$; on the
unary word $a^n$, extensional caching reduces it to $h(n+1)$.

\section{Related Work}
\label{sec:related}

\subsection{Semantic Regular Expressions}

Chen et al.\ introduced semantic regular expressions and the \textsc{Smore}
system, with the goal of synthesizing extraction patterns from positive and
negative examples~\cite{chen2023semantic}.  Their work established SemREs as a
programming abstraction that combines regular structure with semantic
predicates, but did not study the oracle-query complexity of membership.
Huang et al.\ gave the first detailed analysis of SemRE membership
\cite{huang2025membership}.  They presented a two-pass NFA/query-graph
algorithm, analyzed its symbolic running time, related nested matching to
triangle finding, and proved an $O(|r||w|^2)$ query upper bound for their
algorithm.  They also established worst-case lower bounds of
$\Omega(|w|^2)$ for a fixed nonempty predicate vocabulary and
$\Omega(|r||w|^2)$ when the vocabulary is unbounded.  Their hard family uses a
factor-OR expression on $0^m1^m$ and forces the keys $(q_i,0^j1^k)$.

The present paper asks a complementary question.  Rather than analyze the
queries made by one matcher, we minimize over all correct matchers after both
$r$ and $w$ have been fixed, and we refine adaptivity by its number of parallel
rounds.  This viewpoint exposes essential keys, exact randomized expected
cost, and the complete latency--work tradeoff.  Our candidate-universe theorem
identifies the exact deterministic and randomized per-word minimax values
$h\SD(w)$ and $(1-2\delta)h\SD(w)$.  On $0^m1^m$, where
$\SD(0^m1^m)=(m+1)^2$, this recovers the multi-predicate factor-OR family of
Huang et al.; the new point is the exact value for arbitrary words, including
deduplication of repeated substring values and the bounded-span refinement
$h\SD_s(w)$.  Conversely, their query graph could replace our cubic span chart
as a local representation without changing the Boolean function induced by a
fixed instance.

Corollary~\ref{thm:fixed-vocabulary-gap} further separates the adaptivity gap
from vocabulary growth: one predicate over a fixed alphabet suffices when
distinct logarithmic-length span values encode the oracle keys.  The packed
sparse-DNF compiler reduces the total representation size to
$O(E\log^2E)$.  The linear-size unary construction and the one-predicate
construction therefore expose two different points in the tradeoff between
vocabulary, span length, and syntax size.

Several other systems combine automata-like structure with language-model
judgments.  \textsc{ReLM} compiles validation queries over language-model
outputs into automata~\cite{kuchnik2023relm}, while $L^*LM$ places
natural-language oracles inside active automata learning
\cite{vazquezchanlatte2024lstarlm}.  Prompting languages similarly expose
control and data flow around model calls~\cite{beurerkellner2023prompting}.
These systems enable stable, cacheable semantic interfaces.  Our focus is
narrower: the number of independent predicate--substring judgments needed for
one SemRE membership instance.

\subsection{Classical and Extended Regular Expressions}

Classical regular-expression membership can be implemented through finite
automata.  Thompson's construction underlies linear-time simulation in the
input length for a fixed expression~\cite{thompson1968regex}.  Brzozowski
derivatives and Antimirov partial derivatives give alternative algebraic
routes from expressions to matchers
\cite{brzozowski1964derivatives,antimirov1996partial}, and systems such as
\textsc{Hyperscan} optimize automata for modern hardware
\cite{wang2019hyperscan}.  Fine-grained analyses identify sharp complexity
boundaries among classical pattern types
\cite{backurs2016regex,bringmann2017dichotomy}, while streaming formulations
expose additional time--space tradeoffs~\cite{dudek2022streaming}.

These results keep the language semantics fixed and charge local computation.
A SemRE predicate, by contrast, may depend on a whole span whose starting point
was selected earlier in the parse.  Consequently, a matcher may have to
distinguish quadratically many endpoint pairs.  More fundamentally,
$A^*\cap\langle q\rangle$ recognizes the language implemented by $q$, so an
arbitrary semantic predicate cannot in general be compiled into an
oracle-independent finite automaton.

Intersection and complement preserve regular expressiveness but can make
extended regular expressions much more succinct and their membership problem
substantially harder.  Prior work gives complexity classifications and
specialized algorithms for such expressions
\cite{petersen2002intersection,kupferman2002extended,rosu2007extended}; recent
derivative-based engines show that several extended operators can nevertheless
be supported efficiently in practice~\cite{varatalu2025resharp}.  Semantic
filtering resembles intersection syntactically, but its right operand is an
unknown span predicate rather than a regular language supplied as part of the
expression.

\subsection{Provenance and Boolean-Function Evaluation}

The proof-DNF expansion of the span circuit is closely related to database
lineage and provenance: both record alternative derivations and the input facts
used by each derivation.  Provenance semirings provide an algebraic framework
for such annotations~\cite{green2007provenance}.  Our circuit serves a narrower
purpose.  Its variables are extensional predicate--substring judgments, and we
ask how many of their values must be acquired adaptively rather than how to
propagate a general annotation.

Stochastic Boolean-function evaluation studies adaptive testing when variables
have costs and probability distributions, often seeking approximation
algorithms for expected cost~\cite{deshpande2016stochastic}.  Our objective is
different: correctness is pointwise over oracle assignments and cost is
worst-case deterministic or worst-case expected over those assignments.  The
weighted recurrence in Appendix~\ref{sec:algorithms} accommodates nonuniform
query costs but does not assume an input distribution.

\subsection{Decision-Tree and Parameterized Complexity}

Decision-tree complexity studies how many input variables must be inspected to
evaluate a Boolean function under deterministic, randomized, and quantum query
models~\cite{buhrman2002decisiontrees,jukna2012boolean}.  Classical work
relates decision trees to parallel computation~\cite{nisan1991crew} and
analyzes randomized formula and game-tree evaluation
\cite{saks1986probabilistic}.  Bounded rounds of adaptivity have also been
studied explicitly in property testing and randomized decision trees
\cite{canonne2018adaptivity}.  Those results establish hierarchies in different
query problems; our theorem gives the exact tradeoff
$\Theta(RE^{1/R})$ for one restricted SemRE family and converts it into an RPC
batch-cap bound.

The Boolean core of our sharp construction is monotone addressing.  Kane
records the classical function as a monotone depth-$d$ decision tree depending
on exponentially many variables~\cite{kane2013monotone}.  We do not claim this
Boolean idea as new.  Our contributions are the calibration to every support
size $E$, the sharp extremal ratio, and the linear-size unary SemRE realization.
The counting, polynomial-degree, sensitivity, and pointer-subtree arguments are
standard query-complexity tools; the SemRE-specific work is to make their
premises hold under star-free, depth-one, unit-span restrictions and under a
one-predicate fixed-vocabulary encoding.

Parameterized complexity separates arbitrary dependence on a parameter from
polynomial dependence on the input size
\cite{downey2013parameterized,cygan2015parameterized}.  We borrow this
separation-of-dependence viewpoint only to distinguish query bounds that are
linear in the word length from bounds independent of it.  Since local
computation remains polynomial relative to the oracle, we do not introduce a
new machine-time complexity class.

\section{Concluding Remarks}
\label{sec:conclusion}

The main conclusion is sharper than an order-of-growth separation.  For every
number $E$ of essential oracle keys, a unary, star-free, semantic-depth-one
SemRE of size $\Theta(E)$ has one-round cost $E$ but exact two-round cost
\[
 \log_2E+\tfrac12\log_2\log_2E+O(1).
\]
Together with the universal support-counting bound, this determines the
largest nonadaptive-to-adaptive ratio as
$(1+o(1))E/\log_2E$.  Maximal savings therefore do not require a long adaptive
computation: two rounds suffice.

The guarded family gives the complementary picture.  Its exact $R$-round cost
is $\Theta(RE^{1/R})$, producing a strict hierarchy from exhaustive
nonadaptive evaluation to logarithmic unrestricted cost.  Under a cap of $B$
queries per interaction, this becomes an exact asymptotic latency law of
$\Theta(\max\{1,\log E/\log(B+1)\})$ rounds.  These two constructions separate
the existence of a large adaptivity gap from the shape of the full
round--query tradeoff.

The representation results show that neither phenomenon depends on a growing
vocabulary.  A packed sparse-DNF compiler realizes both families over the
fixed alphabet $\{0,1,\#\}$ with one predicate, logarithmic-length semantic
spans, and $O(E\log^2E)$ total size.  The span length is optimal up to constants
for a fixed alphabet.  Randomization is exact under the paper's expected-cost
convention: nonadaptive complexity is $(1-2\delta)E$ for every instance, and
the fixed-word candidate-universe minimax value is
$(1-2\delta)h\SD(w)$, or $(1-2\delta)h\SD_s(w)$ under a span bound.

The span circuit is the enabling SemRE representation.  It preserves sharing
between equal substring values and turns semantic matching into evaluation of
a monotone Boolean function without expanding all parses.  It is not a claim
of optimal local parsing time: symbolic computation, atomic oracle work, and
sequential oracle latency remain distinct resources.

\smallskip
\noindent\textbf{Scope and Directions.}
The black-box model deliberately assumes no relationship between distinct
oracle keys.  Correlations, logical implications, indexed lookup, and shared
predicate structure can reduce the lower bounds only when exposed through a
promise or a richer interface.  A natural next step is to characterize which
structured-oracle promises preserve the sharp round hierarchy.  Semantic
negation would replace monotone span circuits by general Boolean circuits,
while richer regular-expression operators would mix semantic-query savings
with their own succinctness costs.

\appendix
\section{Additional Consequences for Natural Parameters}
\label{sec:parameters}

The candidate-universe theorem treats every predicate--substring pair as
potentially relevant.  A particular instance may expose far fewer pairs
because its regular structure rules out most filter spans or because repeated
span values share an extensional key.  We now relate these progressively
smaller candidate sets to the optimal cost.

Let
$t$ be the number of filter occurrences and $h\le t$ the number of distinct
predicate identifiers.  Let $c_{\rm occ}$ count the
filter-occurrence/interval triples that survive sound symbolic pruning, and
let $c$ be the number of distinct keys $(q,w[i:j])$ among those triples.  Write
$e=e(r,w)$ for the essential-key count from Section~\ref{sec:characterization}.
Then
\[
 \Qdet(r,w)\le e\le c\le c_{\rm occ},
 \qquad c\le h\SD(w).
\]
Here $e$ is the exact Boolean support size, whereas $c$ is an operational
overapproximation obtainable from symbolic provenance.  Querying the $c$ keys
and then evaluating the span circuit proves the upper bound.  The factor-OR
construction has $e=c$, so every inequality can be tight.

\subsection{Symbolic Pruning and Candidate Feasibility}

Replace every semantic filter $u\cap\langle q\rangle$ by $u$, or equivalently
set every oracle variable to one.  If the root of this all-one regular
relaxation is false, the instance can be rejected without an oracle call.
Dually, if the circuit is true when every oracle variable is zero, membership
is oracle-independent and the matcher can accept immediately.  In every
remaining case, an inside/outside provenance traversal may discard filter
spans that occur on no all-one accepting proof.  This pruning is sound because
every actual accepting parse is also an all-one parse.

Provenance nevertheless overapproximates Boolean relevance.  For example,
$X_q\lor(X_q\land X_{q'})$ simplifies to $X_q$, although $q'$ appears on an
all-one accepting parse.  Computing exact relevance may itself require
nontrivial Boolean reasoning; our $c$ bound remains valid for any sound
candidate overapproximation.

\subsection{Bounded Semantic Spans}

Let $s$ be the maximum length of a span that remains symbolically feasible at
any filter.  This is a structural property of the instance, rather than the
largest span inspected by one particular query order.  Define
\[
 \SD_s(w)=\bigl|\{w[i:j]:j-i\le s\}\bigr|.
\]
For $\sigma=|\Sigma|$,
\[
 \SD_s(w)\le 1+\sum_{\ell=1}^{\min(s,n)}
          \min\{n-\ell+1,\sigma^\ell\}.
\]
Consequently, eager evaluation satisfies
\[
 Q\le h\SD_s(w)
 \quad\text{and}\quad
 Q=O(tn(s+1))
\]
where the second bound does not use value-based sharing across positions.  For
fixed $\sigma\ge2$, $\SD_s(w)=O(\sigma^s)$; for a unary alphabet,
$\SD_s(w)\le\min(n,s)+1$.  A fixed alphabet and span bound therefore cap the
value universe independently of input length, once every short string occurs.

Corollary~\ref{thm:span-lb} shows that these candidate bounds are exact in both
the deterministic and randomized minimax models.  Its witness uses
\[
 U_s=\bigcup_{\ell=0}^{s}A^\ell
 \quad\text{and}\quad
 r_{h,s}=\bigcup_{j=1}^{h}A^*(U_s\cap\langle q_j\rangle)A^*,
\]
where $A^0=\eps$.  This depth-one expression induces the OR of exactly the
$h\SD_s(w)$ keys $(q_j,x)$ with $|x|\le s$.  In the primitive grammar, one copy of
$U_s=\bigcup_{\ell=0}^{s}A^\ell$ has size
$O(|\Sigma|\sum_{\ell=0}^{s}(\ell+1))=O(|\Sigma|(s+1)^2)$.
Thus $r_{h,s}$ has size $O(h|\Sigma|(s+1)^2)$.

For $s\ge1$, a linearized $\sigma$-ary de Bruijn word of order $s$ contains all
$\sigma^s$ strings of length $s$.  Corollary~\ref{thm:span-lb} therefore yields
the tight lower bound $\Omega(h\sigma^s)$, so exponential dependence on $s$ is
unavoidable in the value universe.  When $w$ is substring-rich and $s$ grows
with $n$, the positional $\Theta(hns)$ regime may instead determine the count.
Written in the primitive grammar, $r_{h,s}$ has size polynomial in $h$, $s$,
and $|\Sigma|$ by the explicit estimate above; the query lower bound does not
hide an exponentially large expression.

\subsection{Nesting Depth and Positive Assignments}

Bounding semantic span length reduces the candidate universe; bounding nesting
depth alone does not.  In fact, distinct nested predicates can multiply the
number of independent keys.  Let
$\Sub^+(w)=\Sub(w)\setminus\{\eps\}$ and
$\SD^+(w)=|\Sub^+(w)|$.  Let
$B_0=A^+$, $B_j=B_{j-1}\cap\langle q_j\rangle$, and
$r_d=A^* B_d A^*$.  Then
\[
 F_{r_d,w}=\bigvee_{x\in\Sub^+(w)}
             \bigwedge_{j=1}^d X_{q_j,x}.
\]

\begin{theorem}[Nesting lower bound]
\label{thm:nesting}
For distinct $q_1,\ldots,q_d$,
$D(F_{r_d,w})=d\SD^+(w)$.
\end{theorem}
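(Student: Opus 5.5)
The plan is to first confirm the displayed formula for $F_{r_d,w}$ and then prove matching upper and lower bounds on decision-tree depth.

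\textbf{Structure of $F_{r_d,w}$.} For an interval $(i,j)$ with $i<j$, unfolding the span-circuit recurrences for the filter chain gives
\[
 M_{B_d}(i,j)=\bigwedge_{j'=1}^d X_{q_{j'},w[i:j]}.
\]
This holds because every filter in the chain is applied to the same span. The outer $A^*$ factors then allow any nonempty factor. Extensionality identifies intervals that spell the same string, so we obtain the stated OR over $x\in\Sub^+(w)$.

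The keys $X_{q_j,x}$ are pairwise distinct. Distinct predicate names and distinct strings give distinct keys. Hence $F_{r_d,w}$ is an OR of $N=\SD^+(w)$ ANDs of width $d$ on disjoint variable sets, with $dN$ variables in total.

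\textbf{Upper bound.} This is immediate: querying all $dN$ keys suffices, as noted in Theorem~\ref{thm:decision-tree}.

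\textbf{Lower bound.} I would use an adversary argument for a read-once OR-of-ANDs. The adversary answers $1$ to every query in a block until that block's last unqueried variable is queried, and answers that variable $0$.

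Under this strategy, a block that has not been fully queried still has an unqueried variable. Setting that variable to $1$, and all other unqueried variables of the block to $1$, makes the block true. No block ever becomes true under the adversary's answers, since each fully queried block contains a $0$.

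So as long as some block has an unqueried variable, the output is undetermined:
\begin{itemize}
 \item completing all unqueried variables with $1$ gives output $1$;
 \item completing all unqueried variables with $0$ gives output $0$, because every block then has a $0$.
\end{itemize}
The second completion is consistent with the transcript, since answered variables keep their values. Therefore any correct tree must query all $dN$ variables on this adversarial path, and $D(F_{r_d,w})\ge d\SD^+(w)$.

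An alternative route is the exact-degree argument used in Lemma~\ref{lem:address-complexity}. The multilinear polynomial is
\[
 1-\prod_x\Bigl(1-\prod_j X_{q_j,x}\Bigr),
\]
whose top monomial, the product of all $dN$ variables, has coefficient $\pm1$. Since decision-tree depth is at least exact degree, this also gives the bound.

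\textbf{Main obstacle.} The only delicate point is the first step: verifying that the key sets of distinct substrings are disjoint and that nothing else contributes. In particular, the empty span is excluded by $B_0=A^+$, which is why $\SD^+$ appears rather than $\SD$. With that checked, both the adversary argument and the degree argument apply directly.
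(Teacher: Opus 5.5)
Your proposal is correct and follows the paper's proof: the upper bound comes from querying all variables, and the lower bound uses the same block adversary (answer $1$ until a block's final variable, then $0$), with the all-ones and all-zeros completions witnessing both outputs while any block is incomplete. Your alternative exact-degree argument via $1-\prod_x\bigl(1-\prod_j X_{q_j,x}\bigr)$ is also valid, and your check of the displayed formula for $F_{r_d,w}$ (disjoint key sets, with the empty span excluded by $B_0=A^+$) fills in a step the paper only states.
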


\begin{proof}
Querying all variables gives the upper bound.  For the lower bound, treat the
$d$ variables for each substring as one block.  An adversary answers one to
the first $d-1$ queries in a block and zero to its final query.  An incomplete
block has an accepting completion in which all remaining bits are one.  As
long as some block is incomplete, there is also a rejecting completion by
placing a zero in every incomplete block.  Hence a correct algorithm must
complete every block and query all $d\SD^+(w)$ bits.
\end{proof}

Repeated nesting with the same predicate adds no calls because the key
$(q,x)$ is cached.  On the other hand, depth one already permits quadratic
hardness, so a depth bound does not remove the base lower bound.

The number of positive keys requires a different quantifier order from the
structural parameters above.  For a matcher $A$ and oracle assignment $\alpha$, let
$Q_A(r,w;\alpha)$ be the number of queries made on that assignment and let
$p_\alpha$ be the number of positive keys among the $c$ candidates.  A bound in
terms of $p_\alpha$ is therefore a pointwise or promise-sensitive guarantee,
not a bound on the assignment-independent quantity $\Qdet(r,w)$.

Even under the promise $p_\alpha\le1$, exact factor-OR evaluation has
deterministic minimax cost $c$.  The promise includes the all-zero assignment,
and an adversary may place the unique positive at the last queried key.
The randomized coupling in Theorem~\ref{thm:candidate-universe} applies to the
same promise class and gives at least $(1-2\delta)c$ expected queries on the
all-zero assignment.  Thus an $O(p_\alpha)$ guarantee requires a stronger
interface that enumerates positives or supports indexed lookup.

\subsection{Parameter-Sensitive Query Bounds}

Classical parameterized complexity classifies ordinary running time
\cite{downey2013parameterized}.  With unit-cost oracle access, the span
algorithm is polynomial relative to the oracle regardless of the parameters
above.  Standard running-time classes therefore do not express when semantic
calls become linear in the input length or independent of it.  Borrowing only
the separation-of-dependence viewpoint, we use the following two descriptions.
A parameter $\kappa$ gives a \emph{linear-in-$n$ query bound} if, for some
computable function $f$ and fixed polynomial $P$,
\[
 Q\le f(\kappa)P(m,\sigma)n.
\]
It gives a \emph{string-independent query bound} if
\[
 Q\le f(\kappa)P(m,\sigma),
\]
where the degree of $P$ is independent of $\kappa$.  These phrases describe
bound shapes; we do not introduce new parameterized complexity classes.

\begin{corollary}[Parameter-sensitive query-bound summary]
\label{thm:parameter-classification}
Under the atomic oracle model:
\begin{enumerate}
  \item The essential-key count gives $Q\le e$, and the operational candidate
  count gives $Q\le c$, with $e\le c$.  Both bounds are worst-case tight.
  \item The span bound $s$ gives the linear-in-$n$ bound
  $Q=O(tn(s+1))$.
  \item For a fixed alphabet, $s$ also gives a string-independent bound.
  Dependence $\Theta(\sigma^s)$ is unavoidable when $\sigma\ge2$, while
  $\Theta(s+1)$ is tight for a unary alphabet.
  \item None of $h$, $t$, $|r|$, or nesting depth $d$ alone guarantees a
  linear-in-$n$ query bound.
\end{enumerate}
If $\sigma$ is part of the input, $s$ alone does not give a
string-independent bound of the form
$f(s)\sigma^C\operatorname{poly}(m)$ for a constant $C$ independent of $s$.
\end{corollary}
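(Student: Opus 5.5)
The four items can be handled separately, each reduced to an inequality or family already established earlier in this appendix or in the main body.

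\emph{Item 1.} The chain $\Qdet(r,w)\le e\le c$ was displayed at the start of the appendix: query all candidate keys, then evaluate the span circuit (Theorem~\ref{thm:span-circuit}). Sound pruning keeps every essential key, which gives $e\le c$. For tightness, use the factor-OR expression $r_h$. By Theorem~\ref{thm:candidate-universe} its cost equals $h\SD(w)$, which is also its number of candidate keys and of essential keys, so $\Qdet=e=c$ there.

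\emph{Item 2.} Count filter-occurrence/interval pairs whose span length is at most $s$. Each of the $t$ filter occurrences has at most $n+1$ start positions and at most $s+1$ lengths per start. Hence $c\le c_{\rm occ}\le t(n+1)(s+1)=O(tn(s+1))$, and item~1 gives $Q\le c$.

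\emph{Item 3.} Use the bound $Q\le h\SD_s(w)\le h\bigl(1+\sum_{\ell\le s}\sigma^\ell\bigr)=O(t\sigma^{s})$ for $\sigma\ge2$, and $O(t(s+1))$ for unary alphabets. This is $f(s)P(m,\sigma)$ with $t\le m$, so it is string-independent. For the lower bound, apply Corollary~\ref{thm:span-lb} with $h=1$ to a linearized de Bruijn word of order $s$, which forces $\Omega(\sigma^s)$ queries. On the unary word $a^n$ with $n\ge s$ we have $\SD_s=s+1$, giving $\Theta(s+1)$.

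\emph{Item 4.} Take $h=t=1$ and constant $|r|$ and $d$ (the factor-OR with one predicate), on substring-rich binary words such as $0^m1^m$. There $\SD=\Theta(n^2)$, and Theorem~\ref{thm:candidate-universe} forces $\Theta(n^2)$ queries. This rules out any bound $f(\kappa)P(m,\sigma)n$ with $\kappa\in\{h,t,|r|,d\}$.

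\emph{Final claim.} I expect this step to be the main obstacle. The idea is to fix $s=1$ and let $\sigma$ grow. Take a word containing $\sigma$ distinct letters, of length $n=\sigma$, and the expression $r_{1,1}$ with $|r|=O(\sigma)$. Corollary~\ref{thm:span-lb} gives cost $\SD_1(w)=\sigma+1$. That is still polynomial, so $s=1$ is not enough. Instead I would use a fixed $s$ and spans of length $s$ over $\sigma$ letters: with a word containing all $\sigma^s$ strings of length $s$ (a de Bruijn word), the cost is $\Omega(\sigma^s)$ while $m=O(\sigma(s+1)^2)$. A bound $f(s)\sigma^C\operatorname{poly}(m)$ with a fixed degree is only $f(s)\sigma^{C'}$ for some constant $C'$. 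Choosing $s>C'$ and letting $\sigma\to\infty$ gives a contradiction. The delicate point is that the polynomial $\operatorname{poly}(m)$ in the bound must have a fixed degree independent of $s$, and that $m$ stays polynomial in $\sigma$ and $s$. This holds by the explicit size estimate for $r_{h,s}$ in the appendix.
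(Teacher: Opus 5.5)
Your proposal is correct and follows the paper's own proof item by item. Item 1 uses the candidate chain with factor-OR for tightness. Items 2 and 3 use the span-length count and Corollary~\ref{thm:span-lb} on de~Bruijn and unary words. Item 4 uses one-predicate factor-OR on substring-rich words, and the final claim uses the de~Bruijn instance. Your explicit treatment of the $\operatorname{poly}(m)$ factor, via the size estimate $|r_{1,s}|=O(\sigma(s+1)^2)$, fills in a step that the paper's one-line justification leaves implicit.
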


\begin{proof}
Item 1 follows from Theorem~\ref{thm:decision-tree} and the candidate upper
bound; items 2 and 3 follow from Corollary~\ref{thm:span-lb}.  The unary
dependence in item 3 is tight because
$\SD_s(a^n)=s+1$ whenever $n\ge s$.  For item 4, fix the listed parameter to a
constant: the depth-one, one-predicate expression from
Theorem~\ref{thm:candidate-universe} still costs $\Theta(n^2)$ on
substring-rich words.  The
final statement follows from the $\Omega(\sigma^s)$ de Bruijn instance,
which cannot be bounded by $f(s)\sigma^C$ for a universal constant $C$.
\end{proof}

\begin{table}[t]
\caption{Parameter-sensitive query bounds for exact matching.}
\label{tab:parameters}
\small
\begin{tabular}{@{}p{0.17\columnwidth}p{0.75\columnwidth}@{}}
\toprule
Parameter & Query-sensitive conclusion \\
\midrule
$e$ & Exact nonadaptive cost; deterministic adaptive cost lies between
$\lceil\log_2(e+1)\rceil$ and $e$.  The sharp family attains ratio
$(1+o(1))e/\log_2e$, with the optimal leading constant. \\
$c_{\rm occ}$ & $Q\le c_{\rm occ}$; duplicate values may make $c$ much
smaller. \\
$c$ & Independent of $n$ with $Q\le c$; the bound is worst-case tight. \\
$s$ & Linear in $n$ via $O(tn(s+1))$; independent of $n$ for fixed alphabets.
The unavoidable dependence is $\Theta(\sigma^s)$ for fixed $\sigma\ge2$ and
$\Theta(s+1)$ for $\sigma=1$. \\
$h$ or $t$ & No linear-in-$n$ bound by itself: one predicate and one filter
permit $\Theta(n^2)$. \\
$|r|$ & No linear-in-$n$ bound by itself: a constant expression permits
$\Theta(n^2)$. \\
$p_\alpha$ & Assignment-sensitive rather than structural: even the promise
$p_\alpha\le1$ leaves deterministic minimax cost $c$ and randomized
$\Omega(c)$ expected cost on the all-zero assignment. \\
$d$ & No linear-in-$n$ bound by itself; $d=1$ is quadratic, and distinct
nesting can cost $\Theta(dn^2)$. \\
\bottomrule
\end{tabular}
\end{table}

\section{Additional Algorithmic Consequences}
\label{sec:algorithms}

The circuit characterization defines the semantic optimum independently of
the local data structure used to realize it.  We record eager pruning, lazy
evaluation and its certificate structure, exact strategy synthesis, and
extensions beyond unit-cost atomic queries.  These procedures trade symbolic
planning for possible query savings; they do not improve the symbolic
complexity of SemRE matching.

\subsection{Eager Symbolic Pruning}

An eager matcher first constructs the span circuit and evaluates its all-zero
and all-one completions.  If these completions agree, the answer is independent
of the oracle.  Otherwise, the matcher performs contextual provenance pruning,
deduplicates the surviving candidates by their full keys $(q,x)$, queries the
remaining $c$ keys, and evaluates the exact circuit.  Queries may be grouped
into batches without changing their atomic count.  The resulting bounds are
\[
 Q\le c,
 \qquad
 T_{\rm sym}=O(mn^3),
 \qquad
 S=O(mn^2).
\]
Every actual accepting parse survives the all-one relaxation, which proves the
soundness of the pruning step.  The algorithm is query-optimal on negative
factor-OR instances, although it may retain variables that stronger Boolean
simplification would show to be irrelevant.

\subsection{Lazy Lower/Upper Evaluation}

Let $\rho$ be a partial assignment of oracle variables, and define
\[
 \begin{aligned}
 L_\rho&=F_{r,w}(\rho,\text{unknown}=0),\\
 U_\rho&=F_{r,w}(\rho,\text{unknown}=1).
 \end{aligned}
\]
Monotonicity gives
$L_\rho\le F_{r,w}(\Ocal)\le U_\rho$.  A lazy matcher accepts as soon as
$L_\rho=1$ and rejects as soon as $U_\rho=0$.  If neither condition holds, it
chooses an accepting proof under the all-one completion and queries an unknown
key on that proof.  Each iteration fixes a new variable, so the procedure
terminates after at most $c$ queries.

The proof-DNF view gives an exact description of the certificates that can
terminate this procedure.  Let $\mathcal H_{r,w}$ be the hypergraph whose
vertices are oracle keys and whose hyperedges are the sets $K(P)$ for symbolic
accepting proofs $P\in\Pi(r,w)$.

\begin{theorem}[Certificate hypergraph]
\label{thm:certificates}
Fix an oracle assignment $\alpha$.
\begin{enumerate}
  \item If $F_{r,w}(\alpha)=1$, its minimum 1-certificate size is
  \[
    \min\{|K(P)|:P\in\Pi(r,w),\ K(P)\subseteq\alpha^{-1}(1)\}.
  \]
  \item If $F_{r,w}(\alpha)=0$, its minimum 0-certificate size is the minimum
  number of zero-valued vertices that hit every hyperedge of
  $\mathcal H_{r,w}$.
\end{enumerate}
\end{theorem}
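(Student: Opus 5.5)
The plan is to work entirely with the proof-DNF identity
\[
 F_{r,w}=\bigvee_{P\in\Pi(r,w)}\ \bigwedge_{z\in K(P)} z
\]
from Section~\ref{sec:circuit}, together with monotonicity. I will use the standard definition: a $b$-certificate for $\alpha$ is a set $C$ of keys such that every assignment agreeing with $\alpha$ on $C$ has value $b$. Both items reduce to describing when a partial assignment forces the output. Because $F_{r,w}$ is monotone, forcing can be tested on a single extreme completion: setting the unqueried keys to zero for a 1-certificate, or to one for a 0-certificate.

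For item~1, I first show that for any proof $P$ with $K(P)\subseteq\alpha^{-1}(1)$, the set $K(P)$ is a 1-certificate. Any assignment agreeing with $\alpha$ on $K(P)$ satisfies the term $\bigwedge_{z\in K(P)}z$, and hence satisfies $F_{r,w}$. Conversely, I take an arbitrary 1-certificate $C$ and consider the assignment that agrees with $\alpha$ on $C$ and is zero elsewhere. It must evaluate to one, so by the DNF identity some term $K(P)$ is satisfied. Every key of $K(P)$ is then one in this assignment, so it lies in $C$ and satisfies $\alpha=1$. This gives $|C|\ge|K(P)|$ for a proof with $K(P)\subseteq\alpha^{-1}(1)$, so the two minima agree. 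The minimum is well defined because $F_{r,w}(\alpha)=1$ guarantees that at least one such proof exists.

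For item~2, take a 0-certificate $C$ and consider the completion that agrees with $\alpha$ on $C$ and is one elsewhere. It must evaluate to zero, so every term $K(P)$ contains a key that is zero in this completion. Such a key lies in $C$ and satisfies $\alpha=0$. Therefore $C\cap\alpha^{-1}(0)$ hits every hyperedge of $\mathcal H_{r,w}$, and its size is at most $|C|$. Conversely, a set $H$ of zero-valued keys that hits every hyperedge is a 0-certificate: under any completion agreeing with $\alpha$ on $H$, each term contains a zero, so every term is false. The minima therefore coincide. The case of an empty hyperedge needs care. A proof with $K(P)=\emptyset$ makes $F_{r,w}$ identically one, which contradicts $F_{r,w}(\alpha)=0$, so no empty hyperedge can occur in item~2.

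The main obstacle is bookkeeping rather than mathematics. I must be sure that $\Pi(r,w)$ omits branches through constant-zero leaves, as stipulated, so that the DNF identity is exact and each $K(P)$ collects distinct extensional keys. With that identity fixed, both directions of each item reduce to the extreme-completion argument above. I would also remark that the hyperedge family may contain non-minimal sets, and that this does not affect either minimum.
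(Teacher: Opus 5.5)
Your proposal is correct and is essentially the paper's argument. Both work with the proof-DNF expansion of $F_{r,w}$ and test forcing on the extreme completion: unfixed keys are set to zero for a 1-certificate and to one for a 0-certificate, which the paper states tersely for sets of positive or negative variables; your extra bookkeeping (reducing an arbitrary certificate to its $\alpha^{-1}(1)$ or $\alpha^{-1}(0)$ part, and ruling out an empty hyperedge when $F_{r,w}(\alpha)=0$) only makes explicit what the paper leaves implicit.
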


\begin{proof}
A set of positive variables forces a monotone DNF to one exactly when it
contains every variable of some term: if it contains no term, setting all
other variables to zero is a rejecting completion.  Dually, a set of negative
variables forces the DNF to zero exactly when it intersects every term;
otherwise, an unhit term can be completed to all ones.  Substituting the proof
terms $K(P)$ proves both statements.
\end{proof}

A short positive certificate can lead to early acceptance.  The converse is
not automatic: repeatedly following failed proofs may use many more queries
than the smallest certificate.  Recomputing the entire circuit after every
query gives the conservative symbolic bound $O(cmn^3)$; incremental
provenance maintenance can reduce this overhead without changing the query
guarantee.

For factor-OR, the minimum 1-certificate has size one, whereas every
0-certificate contains the full candidate set.  Thus a favorable positive
instance may stop after one successful query while an exact negative instance
remains exhaustive.

Nested filters admit another exact evaluation order.  Before querying the
outer predicate of $u\cap\langle q\rangle$, a matcher may compute the child
relation and retain only feasible spans.  This bottom-up order avoids outer
queries on semantically impossible children.  The opposite order can be better
when the outer predicate is cheap or likely to reject.  Neither order dominates
for all oracles, and Theorem~\ref{thm:decision-tree} captures the optimal
deterministic adaptive choice.

\subsection{Exact Strategy Synthesis}

The decision-tree characterization is constructive, although generic optimal
planning is exponential in the number of circuit variables.  Let $N$ be the
number of gates and $c$ the number of input variables.

\begin{proposition}[Optimal-strategy synthesis]
\label{prop:strategy-synthesis}
Given an $N$-gate monotone circuit with $c$ variables, an optimal deterministic
query strategy can be synthesized in $O(3^c(N+c))$ time and
$O(3^c+N)$ space.
\end{proposition}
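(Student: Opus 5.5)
The plan is dynamic programming over \emph{partial assignments} (restrictions) of the $c$ input variables. Each variable is unqueried, fixed to $0$, or fixed to $1$, so there are exactly $3^c$ restrictions $\rho$. For each $\rho$ define $\mathrm{OPT}(\rho)$ as the minimum worst-case number of further queries needed to determine $F$ on the subcube consistent with $\rho$. By Theorem~\ref{thm:decision-tree} this is exactly the optimal deterministic matcher cost once started from $\rho$, and $\Qdet(r,w)=\mathrm{OPT}(\emptyset)$. The recurrence is
\[
 \mathrm{OPT}(\rho)=
 \begin{cases}
  0, & F|_\rho \text{ is constant},\\
  \min_{z\notin\operatorname{dom}\rho}\Bigl(1+\max\{\mathrm{OPT}(\rho[z\mapsto0]),\mathrm{OPT}(\rho[z\mapsto1])\}\Bigr), & \text{otherwise.}
 \end{cases}
\]
Correctness is the standard decision-tree argument. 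The root query of an optimal tree for $F|_\rho$ splits it into two subtrees for the two extensions. Conversely, recording the minimizing $z$ at each $\rho$ yields a strategy whose depth is $\mathrm{OPT}$.

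The key step, and the only nonroutine one, is the constancy test in $O(N)$ time per restriction, which uses monotonicity. Because $F_{r,w}$ is monotone, $F|_\rho$ is constant exactly when $L_\rho=U_\rho$. Here $L_\rho$ and $U_\rho$ are the two evaluations from the lazy lower/upper subsection: fill unknowns with $0$ and with $1$, respectively. Each is one circuit evaluation in $O(N+c)$ time. So the base cases cost $O(3^c(N+c))$ overall.

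It remains to bound the recursive work. A naive transition costs $O(c)$ per state, giving $O(3^c c)$, which is absorbed in $O(3^c(N+c))$. Encode each $\rho$ as a base-$3$ integer, so that both extensions $\rho[z\mapsto b]$ are reached by adding $b\cdot 3^{z}$ or $2\cdot3^{z}$ offsets in $O(1)$ time. Then process states in decreasing order of $|\operatorname{dom}\rho|$, so that every extension is already computed. The table has $3^c$ entries, each storing a value and a chosen variable. The circuit evaluation needs $O(N)$ scratch space, which is reused across evaluations. Together these give $O(3^c+N)$ space.

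The main obstacle is being careful that the time bound is exactly $O(3^c(N+c))$ and not $O(3^c N c)$. One might try to avoid even the per-state circuit evaluation by precomputing $F$ on all $2^c$ total assignments. However, the simplest route that matches the claimed bound is the one above: one $O(N+c)$ evaluation pair plus $O(c)$ transitions per state. I would also check that memoization makes the strategy itself extractable: store the argmin $z$ at each $\rho$ and read the decision tree off from $\emptyset$, without exceeding the stated space.
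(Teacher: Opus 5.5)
Your proposal is correct and matches the paper's proof essentially step for step. Both use the same dynamic program over the $3^c$ partial assignments, the same monotonicity-based constancy test comparing the all-zero and all-one completions, and the same evaluation order by decreasing number of assigned variables. The per-state accounting, $O(N)$ for the test plus $O(c)$ for the minimization, and the storage of one value and one minimizing variable per state are also identical.

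Your base-$3$ indexing and scratch-space remarks are harmless implementation details that the paper leaves implicit. One small slip: if an unassigned variable is coded as digit $0$, the two extension offsets are $3^{z}$ and $2\cdot 3^{z}$, not $b\cdot 3^{z}$.
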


\begin{proof}
For every partial assignment $\rho\in\{0,1,*\}^c$, let $G_\rho$ be the
restricted function.  By monotonicity, $G_\rho$ is constant exactly when its
all-zero and all-one completions agree; two circuit evaluations test this
condition in $O(N)$ time.  For each state, define
\[
 T(\rho)=
 \begin{cases}
 0,&G_\rho\text{ is constant},\\
 1+\displaystyle\min_{x:\rho(x)=*}\max_{b\in\{0,1\}}
 T(\rho[x\leftarrow b]),
 &\text{otherwise}.
 \end{cases}
\]
Evaluate the states in decreasing order of the number of assigned variables,
so that both children of every recurrence have already been computed.  There
are $3^c$ states.  Constant testing costs $O(N)$ per state, and minimizing over
the unassigned variables costs $O(c)$ per state.  Storing one value and one
minimizing variable per state gives the stated space bound.  The recurrence is
the standard optimal decision-tree recurrence, and following the stored
variables yields an optimal matcher.
\end{proof}

Query optimality need not minimize end-to-end cost.  If querying variable $x$
has weight $\lambda_x$, replacing the leading one in the recurrence by
$\lambda_x$ minimizes worst-case additive semantic cost instead of call count.

\subsection{Beyond Unit-Cost Atomic Queries}

When implementation costs matter, we report
\[
 (T_{\rm sym},Q_{\rm atom},R_B),
\]
where $T_{\rm sym}$ is local symbolic work, $Q_{\rm atom}$ is the number of
distinct semantic keys evaluated, and $R_B$ is the number of sequential RPC
interactions when one interaction may submit at most $B$ independently
specified keys.  These quantities separate local parsing, semantic work, and
latency.  A nonnegative weight $\lambda(q,x)$ gives the more general additive
semantic cost
\[
 C_{\Ocal}=\sum_{(q,x)\text{ queried}}\lambda(q,x).
\]
The unit-cost results take $\lambda\equiv1$.

Suppose one batch returns at most $B$ independently specified bits.  Eager
evaluation then uses $\lceil c/B\rceil$ rounds and $c$ atomic evaluations;
negative factor-OR instances make both bounds tight.  For instances with useful
partial adaptivity, this eager latency can be far from optimal:
Corollary~\ref{cor:batch-cap} gives the guarded family an exact
$\Theta(\max\{1,\log e/\log(B+1)\})$ round bound.  If an API instead returns
all $h$ predicates for one substring in a single request, then
$h\SD(w)$ atomic evaluations may require only $\SD(w)$ requests.  This gain
comes from a stronger interface, not from a smaller atomic-query complexity.

Under length-proportional weights, negative factor-OR forces
\[
 C_{\Ocal}=h\sum_{x\in\Sub^+(w)}|x|=\Theta(hn^3)
\]
on the de Bruijn-based substring-rich families used above.  Indeed, the sum is
always $O(n^3)$ because there are at most $n-L+1$ distinct factors of length
$L$.  On a linearized de Bruijn word, all factors of every length
$L\ge\ell$ are distinct.  Since $\ell=O(\log n)$, summing
$L(n-L+1)$ over $n/3\le L\le2n/3$ gives $\Omega(n^3)$.  Thus quadratically many
independent judgments can transmit cubic aggregate span length.  Prompt
sharing, prefix caching, or provider-specific pricing would require a
subadditive cost model rather than the per-key sum considered here.

\bibliographystyle{plainurl}
\bibliography{references}

@article{buhrman2002decisiontrees,
  author  = {Buhrman, Harry and de Wolf, Ronald},
  title   = {Complexity Measures and Decision Tree Complexity: A Survey},
  journal = {Theoretical Computer Science},
  year    = {2002},
  volume  = {288},
  number  = {1},
  pages   = {21--43},
  doi     = {10.1016/S0304-3975(01)00144-X}
}

@article{canonne2018adaptivity,
  author  = {Canonne, Cl{\'e}ment L. and Gur, Tom},
  title   = {An Adaptivity Hierarchy Theorem for Property Testing},
  journal = {Computational Complexity},
  year    = {2018},
  volume  = {27},
  number  = {4},
  pages   = {671--716},
  doi     = {10.1007/s00037-018-0168-4}
}

@article{chen2023semantic,
  author  = {Chen, Qiaochu and Banerjee, Arko and Demiralp, {\c{C}}a{\u{g}}atay and Durrett, Greg and Dillig, I{\c{s}}{\i}l},
  title   = {Data Extraction via Semantic Regular Expression Synthesis},
  journal = {Proceedings of the ACM on Programming Languages},
  year    = {2023},
  volume  = {7},
  number  = {OOPSLA2},
  pages   = {1848--1877},
  doi     = {10.1145/3622863}
}

@book{downey2013parameterized,
  author    = {Downey, Rodney G. and Fellows, Michael R.},
  title     = {Fundamentals of Parameterized Complexity},
  publisher = {Springer},
  address   = {London, UK},
  year      = {2013},
  doi       = {10.1007/978-1-4471-5559-1}
}

@article{huang2025membership,
  author  = {Huang, Yifei and Amini, Matin and Le Glaunec, Alexis and Mamouras, Konstantinos and Raghothaman, Mukund},
  title   = {Membership Testing for Semantic Regular Expressions},
  journal = {Proceedings of the ACM on Programming Languages},
  year    = {2025},
  volume  = {9},
  number  = {PLDI},
  pages   = {1245--1268},
  doi     = {10.1145/3729300}
}

@article{thompson1968regex,
  author  = {Thompson, Ken},
  title   = {Programming Techniques: Regular Expression Search Algorithm},
  journal = {Communications of the ACM},
  year    = {1968},
  volume  = {11},
  number  = {6},
  pages   = {419--422},
  doi     = {10.1145/363347.363387}
}

@article{antimirov1996partial,
  author  = {Antimirov, Valentin},
  title   = {Partial Derivatives of Regular Expressions and Finite Automaton Constructions},
  journal = {Theoretical Computer Science},
  year    = {1996},
  volume  = {155},
  number  = {2},
  pages   = {291--319},
  doi     = {10.1016/0304-3975(95)00182-4}
}

@inproceedings{backurs2016regex,
  author    = {Backurs, Arturs and Indyk, Piotr},
  title     = {Which Regular Expression Patterns Are Hard to Match?},
  booktitle = {2016 IEEE 57th Annual Symposium on Foundations of Computer Science},
  year      = {2016},
  pages     = {457--466},
  publisher = {IEEE},
  address   = {Los Alamitos, CA, USA},
  doi       = {10.1109/FOCS.2016.56}
}

@article{beurerkellner2023prompting,
  author  = {Beurer-Kellner, Luca and Fischer, Marc and Vechev, Martin},
  title   = {Prompting Is Programming: A Query Language for Large Language Models},
  journal = {Proceedings of the ACM on Programming Languages},
  year    = {2023},
  volume  = {7},
  number  = {PLDI},
  pages   = {1946--1969},
  doi     = {10.1145/3591300}
}

@inproceedings{bringmann2017dichotomy,
  author    = {Bringmann, Karl and Gr{\o}nlund, Allan and Larsen, Kasper Green},
  title     = {A Dichotomy for Regular Expression Membership Testing},
  booktitle = {2017 IEEE 58th Annual Symposium on Foundations of Computer Science},
  year      = {2017},
  pages     = {307--318},
  publisher = {IEEE},
  address   = {Los Alamitos, CA, USA},
  doi       = {10.1109/FOCS.2017.36}
}

@article{brzozowski1964derivatives,
  author  = {Brzozowski, Janusz A.},
  title   = {Derivatives of Regular Expressions},
  journal = {Journal of the ACM},
  year    = {1964},
  volume  = {11},
  number  = {4},
  pages   = {481--494},
  doi     = {10.1145/321239.321249}
}

@book{cygan2015parameterized,
  author    = {Cygan, Marek and Fomin, Fedor V. and Kowalik, {\L}ukasz and Lokshtanov, Daniel and Marx, D{\'a}niel and Pilipczuk, Marcin and Pilipczuk, Micha{\l} and Saurabh, Saket},
  title     = {Parameterized Algorithms},
  publisher = {Springer},
  address   = {Cham, Switzerland},
  year      = {2015},
  doi       = {10.1007/978-3-319-21275-3}
}

@inproceedings{dudek2022streaming,
  author    = {Dudek, Bart{\l}omiej and Gawrychowski, Pawe{\l} and Gourdel, Garance and Starikovskaya, Tatiana},
  title     = {Streaming Regular Expression Membership and Pattern Matching},
  booktitle = {Proceedings of the 2022 Annual ACM-SIAM Symposium on Discrete Algorithms},
  year      = {2022},
  pages     = {670--694},
  publisher = {SIAM},
  address   = {Philadelphia, PA, USA},
  doi       = {10.1137/1.9781611977073.30}
}

@book{jukna2012boolean,
  author    = {Jukna, Stasys},
  title     = {Boolean Function Complexity: Advances and Frontiers},
  publisher = {Springer},
  address   = {Berlin, Heidelberg},
  year      = {2012},
  series    = {Algorithms and Combinatorics},
  volume    = {27},
  doi       = {10.1007/978-3-642-24508-4}
}

@article{kane2013monotone,
  author  = {Kane, Daniel M.},
  title   = {A Monotone Function Given by a Low-Depth Decision Tree That Is Not an Approximate Junta},
  journal = {Theory of Computing},
  year    = {2013},
  volume  = {9},
  number  = {17},
  pages   = {587--592},
  doi     = {10.4086/toc.2013.v009a017}
}

@inproceedings{kupferman2002extended,
  author    = {Kupferman, Orna and Zuhovitzky, Sharon},
  title     = {An Improved Algorithm for the Membership Problem for Extended Regular Expressions},
  booktitle = {Mathematical Foundations of Computer Science 2002},
  year      = {2002},
  pages     = {446--458},
  publisher = {Springer},
  address   = {Berlin, Heidelberg},
  doi       = {10.1007/3-540-45687-2_37}
}

@article{kuchnik2023relm,
  author  = {Kuchnik, Michael and Smith, Virginia and Amvrosiadis, George},
  title   = {Validating Large Language Models with {ReLM}},
  journal = {Proceedings of Machine Learning and Systems},
  year    = {2023},
  volume  = {5},
  pages   = {457--476}
}

@inproceedings{green2007provenance,
  author    = {Green, Todd J. and Karvounarakis, Grigoris and Tannen, Val},
  title     = {Provenance Semirings},
  booktitle = {Proceedings of the Twenty-Sixth ACM SIGMOD-SIGACT-SIGART Symposium on Principles of Database Systems},
  year      = {2007},
  pages     = {31--40},
  publisher = {ACM},
  address   = {New York, NY, USA},
  doi       = {10.1145/1265530.1265535}
}

@article{deshpande2016stochastic,
  author  = {Deshpande, Amol and Hellerstein, Lisa and Kletenik, Devorah},
  title   = {Approximation Algorithms for Stochastic Submodular Set Cover with Applications to Boolean Function Evaluation and Min-Knapsack},
  journal = {ACM Transactions on Algorithms},
  year    = {2016},
  volume  = {12},
  number  = {3},
  pages   = {1--28},
  doi     = {10.1145/2876506}
}

@article{nisan1991crew,
  author  = {Nisan, Noam},
  title   = {{CREW PRAMs} and Decision Trees},
  journal = {SIAM Journal on Computing},
  year    = {1991},
  volume  = {20},
  number  = {6},
  pages   = {999--1007},
  doi     = {10.1137/0220062}
}

@inproceedings{petersen2002intersection,
  author    = {Petersen, Holger},
  title     = {The Membership Problem for Regular Expressions with Intersection Is Complete in {LOGCFL}},
  booktitle = {STACS 2002},
  year      = {2002},
  pages     = {513--522},
  publisher = {Springer},
  address   = {Berlin, Heidelberg},
  doi       = {10.1007/3-540-45841-7_42}
}

@inproceedings{rosu2007extended,
  author    = {Ro{\c{s}}u, Grigore},
  title     = {An Effective Algorithm for the Membership Problem for Extended Regular Expressions},
  booktitle = {Foundations of Software Science and Computation Structures},
  year      = {2007},
  pages     = {332--345},
  publisher = {Springer},
  address   = {Berlin, Heidelberg},
  doi       = {10.1007/978-3-540-71389-0_24}
}

@inproceedings{saks1986probabilistic,
  author    = {Saks, Michael and Wigderson, Avi},
  title     = {Probabilistic Boolean Decision Trees and the Complexity of Evaluating Game Trees},
  booktitle = {27th Annual Symposium on Foundations of Computer Science},
  year      = {1986},
  pages     = {29--38},
  publisher = {IEEE},
  address   = {Los Alamitos, CA, USA},
  doi       = {10.1109/SFCS.1986.44}
}

@techreport{vazquezchanlatte2024lstarlm,
  author      = {Vazquez-Chanlatte, Marcell and Elmaaroufi, Karim and Witwicki, Stefan and Seshia, Sanjit A.},
  title       = {$L^*LM$: Learning Automata from Examples Using Natural Language Oracles},
  institution = {arXiv},
  year        = {2024},
  number      = {2402.07051},
  doi         = {10.48550/arXiv.2402.07051}
}

@article{varatalu2025resharp,
  author    = {Varatalu, Ian Erik and Veanes, Margus and Ernits, Juhan},
  title     = {{RE\#}: High Performance Derivative-Based Regex Matching with Intersection, Complement, and Restricted Lookarounds},
  journal   = {Proceedings of the ACM on Programming Languages},
  year      = {2025},
  volume    = {9},
  number    = {POPL},
  articleno = {1},
  numpages  = {32},
  doi       = {10.1145/3704837}
}

@inproceedings{wang2019hyperscan,
  author    = {Wang, Xiang and Hong, Yang and Chang, Harry and Park, KyoungSoo and Langdale, Geoff and Hu, Jiayu and Zhu, Heqing},
  title     = {Hyperscan: A Fast Multi-Pattern Regex Matcher for Modern {CPUs}},
  booktitle = {16th USENIX Symposium on Networked Systems Design and Implementation},
  year      = {2019},
  pages     = {631--648},
  publisher = {USENIX Association},
  address   = {Berkeley, CA, USA},
  url       = {https://www.usenix.org/conference/nsdi19/presentation/wang-xiang}
}

\end{document}